\newcommand{\footremember}[2]{%
    \footnote{#2}
    \newcounter{#1}
    \setcounter{#1}{\value{footnote}}%
}
\newcommand{\footrecall}[1]{%
    \footnotemark[\value{#1}]%
} 
\newcommand{\eqdef}{\mathrel{\mathop{\raisebox{1pt}{\scriptsize$:$}}}=}
\newcommand{\class}[1]{\langle #1 \rangle}
\newcommand{\curly}[1]{\left\lbrace#1\right\rbrace}
\newcommand{\set}[1]{\curly{#1}}
\newcommand{\where}{\,\middle|\,}
\newcommand{\subcl}[1]{\mathcal{SUB}\left(#1\right)}
\newcommand{\focl}[1]{\mathcal{FO}\left(#1\right)}
\DeclareMathOperator{\rep}{rep}
\newtheorem{definition}{Definition}
\newtheorem{theorem}{Theorem}
\newtheorem{lemma}{Lemma}
\newtheorem{example}{Example}
\title{Representing Pattern Matching Algorithms by Polynomial-Size Automata}
\author{Tobias Marschall\footremember{cbi}{Center for Bioinformatics, Saarland University, Saarbr\"ucken, Germany}\footnote{Max Planck Institute for Informatics, Saarbr\"ucken, Germany} \and Noemi E. Passing\footrecall{cbi} \footnote{Saarbr\"ucken Graduate School of Computer Science, Saarland University, Saarbr\"ucken, Germany}}
\date{}
\begin{document}
\maketitle
	
\begin{abstract}
Pattern matching algorithms to find exact occurrences of a pattern $S\in\Sigma^m$ in a text $T\in\Sigma^n$ have been analyzed extensively with respect to asymptotic best, worst, and average case runtime.
For more detailed analyses, the number of text character accesses $X^{\mathcal{A},S}_n$ performed by an algorithm $\mathcal{A}$ when searching a random text of length $n$ for a fixed pattern $S$ has been considered.
Constructing a state space and corresponding transition rules (e.g. in a Markov chain) that reflect the behavior of a pattern matching algorithm is a key step in existing analyses of $X^{\mathcal{A},S}_n$ in both the asymptotic ($n\to\infty$) and the non-asymptotic regime.
The size of this state space is hence a crucial parameter for such analyses.
In this paper, we introduce a general methodology to construct corresponding state spaces and demonstrate that it applies to a wide range of algorithms, including Boyer-Moore (BM), Boyer-Moore-Horspool (BMH), Backward Oracle Matching (BOM), and Backward (Non-Deterministic) DAWG Matching (B(N)DM).
In all cases except BOM, our method leads to state spaces of size $O(m^3)$ for pattern length $m$, a result that has previously only been obtained for BMH.
In all other cases, only state spaces with size exponential in $m$ had been reported.
Our results immediately imply an algorithm to compute the distribution of $X^{\mathcal{A},S}_n$ for  fixed~$S$, fixed~$n$, and $\mathcal{A}\in\{\text{BM},\text{BMH},\text{B(N)DM}\}$ in polynomial time for a very general class of random text models.
\end{abstract}

\section{Introduction}
Algorithms to swiftly find exact occurrences of a pattern $S\in\Sigma^m$ in a text $T\in\Sigma^n$ over a finite alphabet $\Sigma$ are a classical topic in computer science and have been studied for decades.
Prominent examples include the Boyer-Moore algorithm (BM) \cite{BoyerM1977}, Boyer-Moore-Horspool (BMH) algorithm \cite{Horspool1980}, and the Knuth-Morris-Pratt (KMP) algorithm \cite{KnuthMP1977}.
The KMP algorithm processes the input text from left to right, accesses every text character exactly once, and uses amortized constant time per character for an optimal worst-case runtime of $O(n+m)$.
The runtime of KMP is independent of input text and pattern, so its best case behavior matches its worst case behavior.
The BMH algorithm, in contrast, achieves a best-case runtime of $O(n/m)$, but exhibits a worst-case runtime of $O(nm)$.

BM and BMH are just two examples of \emph{window-based pattern matching algorithms} that consider a length-$m$ window in the text, compare it to the pattern, and then shift the window by a number of positions.
Under favorable circumstances, the window can be shifted by more than one position, leading to good best-case behavior.
The BMH algorithm, for example, compares window content and pattern from right to left and determines the shift based on the last character in the window.
In case this character is absent from the pattern, the window can safely be shifted by $m$ positions without missing a pattern occurrence.
Backward Oracle Matching (BOM) \cite{AllauzenCR2001}, Backward DAWG Matching (BDM) \cite{CrochemoreCGJLPR1994}, Backward Nondeterministic DAWG Matching (BNDM) \cite{Navarro1998}, or Sunday's algorithm \cite{Sunday1990} constitute further examples for window-based pattern matching algorithms.
While following the same general idea, they differ in how the shift is determined, which data structures are needed to determine it, and consequently how long it takes to construct these data structures in a pre-processing step.

In order to study the runtime of such algorithms beyond asymptotic best, average, and worst case behavior, we consider the number of accesses to characters in the input text.
Let $X^{\mathcal{A},S}_n$ denote the random variable giving the number of character accesses done by algorithm $\mathcal{A}$ while searching a random text of length $n$ for pattern $S\in\Sigma^m$.
For the BMH algorithm, i.e.\ when $\mathcal{A}=\text{BMH}$, a number of results on the properties of $X^{\mathcal{A},S}_n$ exist.
In early work \cite{Baeza-Yates1990,Baeza-Yates1992}, it has been shown that $\frac{E[X^{\text{BMH},S}_n]}{n}$  approaches a constant as $n$ goes to infinity.
The same articles provide methods to compute this constant for a given pattern and to obtain its average over all patterns of a given length.
These results were obtained for the case that all characters in the random text are drawn independently with uniform probability.
Later, they have been extended \cite{Mahmoud1997} to show that the distribution of $X^{\text{BMH},S}_n$, written $\mathcal{L}\left(X^{\text{BMH},S}_n\right)$, converges to a normal distribution for $n\to\infty$ (after $X^{\text{BMH},S}_n$ has been properly centered and normalized).
Further generalizations have yielded the same result for Markovian text models \cite{Smythe2001}, i.e.\ when the probability of a character depends on the previous character, and have shown how to compute the asymptotic variance for BMH as well as for BM \cite{Tsai2006}.
Otherwise, all these results are restricted to BMH and do not easily extend to other algorithms.
Conceptually, all these analyses are based on the idea of constructing a suitable set of \emph{states} capturing the algorithms' status and by then building a Markov chain on this state space to derive the desired quantities.

Constructing such a state space can be simplified by employing the framework of \emph{deterministic arithmetic automata (DAAs)} and \emph{probabilistic arithmetic automata (PAAs)} \cite{Marschall2012,MarschallR2011}.
This approach decouples the description of the algorithms' behavior (encoded as a DAA) from the specification of a text model, which together give rise to a PAA.
Using this technique, PAAs for the algorithms BM, BMH, BDM, BNDM, and BOM have been constructed for arbitrary \emph{finite-memory text models} and subsequently been used to compute the full distribution $\mathcal{L}\left(X^{\mathcal{A},S}_n\right)$ for fixed $n$, see \cite{MarschallR2011}.

Asymptotic analyses (such as \cite{Tsai2006}) as well as non-asymptotic analyses (such as \cite{MarschallR2011}) both hinge on finding a suitable state space (and the same state space can be used for both purposes).
The state space given in \cite{Tsai2006} has a size of $O(m^3)$, but is tied to the BMH algorithm and to first-order Markovian text models (although the author states that it can be generalized to higher-order Markovian text models and to the BM algorithm).
The approach in \cite{MarschallR2011} allows to construct a state space for a wide range of window-based pattern matching algorithms and applies to arbitrary text models, but leads to $O(\Sigma^m\cdot m)$ states.
Although \cite{MarschallR2011} also provide an algorithm to minimize the state space, similar to DFA minimisation, the exponential state space needs to be constructed first.

\paragraph*{Contributions}
Here, we report on a direct and general procedure of constructing small DAAs.
Applied to BM, BMH and B(N)DM, it yields DAAs with $O(m^3)$ states.
This immediately implies an algorithm to compute $\mathcal{L}\left(X^{\mathcal{A},S}_n\right)$ in polynomial time for arbitrary finite-memory text models.
Our construction is also applicable to BOM, but without getting a respective guarantee on the state space size (which depends on the details of the factor oracle in this case).
In all cases, we obtain a PAA (which can be interpreted as a Markov chain), that can be used to obtain asymptotic results like existence and value of  $\lim_{n\to\infty}\frac{E[X^{\text{BMH},S}_n]}{n}$.
We emphasize that the methodology we introduce is general in nature and can likely be applied to analyze further pattern matching algorithms with little extra work.

\paragraph*{Article Organization}
We start by briefly revisiting the BM, BMH, B(N)DM, and BOM algorithms in Section~\ref{sec:algorithms} and proceed by explaining the DAA/PAA framework developed by \cite{Marschall2012,MarschallR2011} in Section~\ref{sec:arithmetic_automata}.
Our main results are presented in Section~\ref{ch:scheme}, where we devise a general method to construct small DAAs and prove that it applies to BM, BMH, B(N)DM, and BOM.
Section~\ref{ch:PAA} explores some of the consequences of these results and Section~\ref{sec:conclusion} provides a concluding discussion.

\section{Pattern Matching Algorithms}\label{sec:algorithms}

In this section, we summarize the BM, BMH, B(N)DM, and BOM algorithms. They all maintain a search window $w$ of length of the pattern $|S|=m$ that initially starts at position $0$ in text $T$ such that its rightmost character is at position $m-1$. The window position increases in the course of the algorithm. The two properties of an algorithm $\mathcal{A}$ that influence our analysis are the cost $f^S_\mathcal{A}(w)$ of the window, i.e.\ the number of character accesses required to analyze the window, and the shift $g^S_\mathcal{A}(w)$ of the window, i.e.\ the number of characters the window is shifted after it has been examined.

\subsection{Boyer-Moore algorithm}

The Boyer-Moore (BM) algorithm \cite{BoyerM1977} compares a window $w=w_0 \dots w_{m-1}$ and a pattern $S=s_0 \dots s_{m-1}$ from right to left until it reaches the first mismatch or verifies a complete match of pattern and window. Thus, the cost of $w$ is
\begin{equation*}
	f^S_{\text{BM}}(w) \eqdef \begin{cases}m & \text{if } w=S,\\
	\min_{1 \leq i \leq m}\set{i \where w_{m-i} \neq s_{m-i}} & \text{otherwise}.\end{cases}
\end{equation*}
If the mismatch occurs at position $m-i$, the BM algorithm uses two different rules to determine the shift:
The \emph{bad-character rule} aligns $w_{m-i}$ with the rightmost occurrence of the same character in $s_0 \dots s_{m-i-1}$ if existing. Otherwise, we shift by $m-i+1$ characters:
\begin{equation*}
	bc^S(w,i) \eqdef \begin{cases}m -i +1 & \text{if } w_{m-i} \notin \set{s_0,\dots,s_{m-i-1}},\\
	\min_{1 \leq k \leq m-i}\set{k \where w_{m-i}=s_{m-i-k}} & \text{otherwise}.\end{cases}
\end{equation*}
The \emph{good-suffix rule} aligns $w_{m-i+1} \dots w_{m-1}$ with the rightmost occurrence $s_k \dots s_{k+i-2}$ of the same string in $s_0 \dots s_{m-2}$ with $s_{k-1} \neq w_{m-i}$.
The set of positions where such strings start in $S$ is given by
\[\mathcal{S}^S(w,i)\eqdef\big\{k\in\{0,\dots,m-i\}\big|s_k\dots s_{k+i-2}=w_{m-i+1}\dots w_{m-1} \text{ and } s_{k-1} \neq w_{m-i} \big\}.\]
If no such position exists, i.e.\ $\mathcal{S}^S$ is empty, then it shifts by the least amount such that a prefix of $S$ becomes aligned with a suffix of $w_{m-i+1} \dots w_{m-1}$.
In other words, it aligns $s_0\dots s_k$ with $w_{m-k-1} \dots w_{m-1}$ for the largest possible $k$.
Or formally
\[\mathcal{K}^S(w,i)\eqdef\big\{k\in\{0,\dots,i-2\}\big|s_0\dots s_k = w_{m-k-1} \dots w_{m-1}\big\}\]
and
\[gs^S(w,i) \eqdef
\begin{cases}
m-i-\max\left(\mathcal{S}^S(w,i)\right)+1 & \text{if } \mathcal{S}^S(w,i)\neq\emptyset,\\
m-\max\left(\mathcal{K}^S(w,i)\right) -1  & \text{if } \mathcal{S}^S(w,i)=\emptyset\text{ and }\mathcal{K}^S(w,i)\neq\emptyset,\\
m                                       & \text{otherwise}.
\end{cases}\]
The shift of the BM algorithm is the maximum of the shifts determined by bad-character and good-suffix rules:
\begin{equation*}
g^S_{\text{BM}}(w) \eqdef \max\set{gs^S(w,i), bc^S(w,i)}.
\end{equation*}

\subsection{Boyer-Moore-Horspool algorithm}

The Boyer-Moore-Horspool (BMH) algorithm \cite{Horspool1980} is a simplification of the Boyer-Moore algorithm that uses only a bad-character rule using the rightmost character of the window instead of the mismatch character. Thus, the cost of a window is the same as for the BM algorithm and the shift is $g^S_{\text{BMH}}(w) \eqdef bc^S(w,1)$.

\subsection{Backward (Non-Deterministic) DAWG Matching algorithm}

The Backward DAWG Matching (BDM) algorithm \cite{CrochemoreCGJLPR1994,CrochemoreR1994} is based on a suffix automaton that accepts all suffixes of the reversed pattern $S^{\text{rev}}=s_{m-1} \dots s_0$ and enters a FAIL-state if a string has been read that is not a substring of $S^{\text{rev}}$.
Similar to the BMH algorithm, the windows are processed from right to left. 
The number of character accesses and hence the cost of a window $w$ equals the number of transitions the suffix automaton of $S^{\text{rev}}$ does on $w$. 
Let $j^S_w$ be the number of transitions the suffix automaton of $S^\text{rev}$ does before entering the FAIL-state (excluding the transition to the FAIL-state) when reading $w$. Then
\begin{equation*}
	f^S_{\text{BDM}}(w) \eqdef \begin{cases}m & \text{if } w=S\\
	j^S_w+1 & \text{otherwise}\end{cases}
\end{equation*}
We define a set $\mathcal{I}^{S}(w) \subseteq \set{0, \dots, m-1}$ with $i \in \mathcal{I}^S(w)$ if and only if the suffix automaton of $S^{\text{rev}}$ is in an accepting state after reading $i$ characters of $w^{rev}$. Hence, $\mathcal{I}^{S}(w)$ is the set that contains all positions $i$ such that $w_{m-i} \dots w_{m-1}$ is a prefix of $S$. The shift function is
\begin{equation*}
	g^S_{\text{BDM}}(w) \eqdef \min\set{m-i \where i \in \mathcal{I}^S(w)},
\end{equation*}
thus aligning a matching suffix of the current window with the prefix of the next window.

The Backward Non-Deterministic DAWG Matching (BNDM) algorithm \cite{NavarroR1998} is a version of the BDM algorithm that uses a nondeterministic automaton instead of a deterministic suffix automaton. The construction of the non-deterministic automaton is much easier, but the processing of a text character takes $O(m)$ time instead of $O(1)$ time. However, the nondeterministic automaton can be implemented with bit-parallel operations efficiently, which is especially beneficial when $m$ is smaller than the machine's word size. We refer to \cite{NavarroR1998} for the explicit construction.
Since the BDM and BNDM only differ in the implementation of the automation but not in cost and shift functions, we do not distinguish between them in the following.
	
\subsection{Backward Oracle Matching algorithm}

The Backward Oracle Matching (BOM) algorithm \cite{AllauzenCR1999, AllauzenCR2001} is a variant of the BDM algorithm that uses a simpler deterministic automaton, the so-called factor oracle, for the reversed pattern $S^{\text{rev}}$. 
The simplicity (and fast construction) comes with the downside that the BOM algorithm recognizes (slightly) more strings than the substrings of the pattern, which can lead to unnecessary comparisons and sub-optimal shifts.
We refer to \cite{AllauzenCR1999} for its construction and a detailed discussion.
The algorithm processes the windows from right to left and defines the cost of a window $w$ as the number of transitions the factor oracle does on $w$ including the transition to the FAIL-state. Let $k^S_w$ be the number of transitions the factor oracle of $S^\text{rev}$ does before entering the FAIL-state (excluding the transition to the FAIL-state) when reading $w$. We define the cost function and shift function as follows:
\begin{equation*}
	f^S_{\text{BOM}}(w) \eqdef \begin{cases}m & \text{if } w=S,\\
	k^S_w+1 & \text{otherwise},\end{cases}
\end{equation*}
\begin{equation*}
	g^S_{\text{BOM}}(w) \eqdef \begin{cases}1 & \text{if } w = S,\\
	m-k^S_w & \text{otherwise}.\end{cases}
\end{equation*}
One can observe that the BOM algorithm uses smaller shifts and thus has to consider more windows than the BDM algorithm. However, a factor oracle has only $m+1$ states \cite{AllauzenCR1999} and can be constructed faster than a suffix automaton. Thus, the BOM algorithm uses less space and needs less preprocessing time.

\section{Arithmetic Automata Approach}\label{sec:arithmetic_automata}
The framework of deterministic arithmetic automata (DAA) and probabilistic arithmetic automata (PAA)  allows to conveniently model computations performed sequentially on random sequences.
In \cite{Marschall2012}, this framework is reviewed in depth and many applications are discussed.
It was previously applied to compute the character access count distribution $\mathcal{L}\left(X^{\mathcal{A},S}_n\right)$ for BMH, B(N)DM, and BOM \cite{MarschallR2011}.

\subsection{Deterministic Arithmetic Automata}
%
%

Deterministic arithmetic automata (DAAs) are a variant of deterministic finite automata (DFAs) that additionally perform an arithmetic operation at each state transition.
For example, they can count how often the automaton visits a specific state while reading a text. 

\begin{definition}[Deterministic Arithmetic Automaton, \cite{Marschall2012}]
	A \emph{deterministic arithmetic automaton (DAA)} is a tuple
	$\mathcal{D} = \left( \mathcal{Q}, q_{0}, \Sigma, \delta, \mathcal{V}, v_{0}, \mathcal{E}, \left(\eta_{q}\right)_{q \in \mathcal{Q}}, \left(\theta_{q}\right)_{q \in \mathcal{Q}}\right)$,
	where $\mathcal{Q}$ denotes a finite set of states, $q_{0} \in \mathcal{Q}$ is the start  state, $\Sigma$ is a finite alphabet, $\delta: \mathcal{Q} \times \Sigma \rightarrow \mathcal{Q}$ is a transition function, $\mathcal{V}$ is a set of values, $v_{0} \in \mathcal{V}$ is called the start value, $\mathcal{E}$ is a finite set of emissions, $\eta_{q} \in \mathcal{E}$ is the emission associated with state $q$ and $\theta_{q}: \mathcal{V} \times \mathcal{E} \rightarrow \mathcal{V}$ is a binary operation associated with state $q$. 
\end{definition}

Intuitively, a DAA starts with the pair $\left(q_{0},v_{0}\right)$ and reads a sequence of characters from the alphabet. When the automaton is in state $q$ with value $v$ and reads a character $\sigma \in \Sigma$, it does a transition to state $q' \eqdef \delta\left(q,\sigma\right)$ and updates its value to $v' \eqdef \theta_{q'}\left(v,\eta_{q'}\right)$ using the binary operation $\theta_{q'}$ and the emission $\eta_{q'}$ of the successor state $q'$.
To observe the transitions in the states and the updates of the values simultaneously, we define the \emph{joint transition function} of deterministic arithmetic automata.

\begin{definition}[Joint transition function, \cite{Marschall2012}]\label{def:joint transition function} The associated \emph{joint transition function} $\tilde{\delta}: \left( \mathcal{Q} \times \mathcal{V} \right) \times \Sigma \rightarrow \left( \mathcal{Q} \times \mathcal{V} \right)$ of a DAA $\mathcal{D} = \left( \mathcal{Q}, q_{0}, \Sigma, \delta, \mathcal{V}, v_{0}, \mathcal{E}, \left(\eta_{q}\right)_{q \in \mathcal{Q}}, \left(\theta_{q}\right)_{q \in \mathcal{Q}}\right)$ is defined by:
	\begin{equation*}
		\tilde{\delta}\left(\left(q,v\right),\sigma\right) \eqdef \left( \delta\left(q,\sigma\right), \theta_{\delta\left(q,\sigma\right)}\left(v, \eta_{\delta\left(q,\sigma\right)}\right)\right)
	\end{equation*}
	We extend this definition inductively from $\Sigma$ to $\Sigma^*$ by defining $\hat{\delta}:  \left( \mathcal{Q} \times \mathcal{V} \right) \times \Sigma^* \rightarrow \left( \mathcal{Q} \times \mathcal{V} \right):$
	\begin{align*}
		\hat{\delta}\left(\left(q,v\right),\varepsilon\right) &\eqdef \left(q,v\right) \\
		\hat{\delta}\left(\left(q,v\right),x\sigma\right) &\eqdef \tilde{\delta}\left(\hat{\delta}\left(\left(q,v\right),x\right),\sigma\right)
	\end{align*}
\end{definition}
Note that, for a finite value sets~$\mathcal{V}$, a DAA can hence be be seen as a DFA over the state space $\mathcal{Q}\times\mathcal{V}$.
So DAAs do not provide more expressive power; their benefit lies in allowing for cleaner and more intuitive models.

\begin{definition}[Value computed by a DAA, \cite{Marschall2012}]
	Let $\mathcal{D}$ be a \emph{DAA} with transition function $\delta$, let $\hat{\delta}$ be the joint transition function of $\mathcal{D}$ and let $T \in \Sigma^m$ be a text. We say that $\mathcal{D}$ computes the value $v$ if we have $\hat{\delta}\left(\left(q_{0},v_{0}\right),T\right) = \left(q,v\right)$ for some $q \in \mathcal{Q}$. We define $value_{\mathcal{D}}\left(T\right) \eqdef v$.
\end{definition}

\paragraph*{DAAs encoding Pattern Matching Algorithms}

We construct a DAA that upon reading a text $T \in \Sigma^*$ calculates the total cost of $T$, i.e. the number of character accesses an algorithm $\mathcal{A}$ with cost function $f^{S}_{\mathcal{A}}$ and shift function $g^{S}_{\mathcal{A}}$ performs when searching $T$ for a pattern $S$.

\begin{definition}[DAA encoding a pattern matching algorithm, \cite{MarschallR2011}]\label{def:DAA encoding pattern matching algorithm}
	Let $S \in \Sigma^m$ be a pattern and let $\mathcal{A}$ be an algorithm with functions $f^{S}_{\mathcal{A}}$ and $g^{S}_{\mathcal{A}}$. We define the DAA $\mathcal{D}$ that encodes $\mathcal{A}$ as
	\begin{itemize}
		\item $\mathcal{Q} \eqdef \Sigma^m \times \set{0, \dots ,m}$
		\item $q_{0} \eqdef (S,m)$
		\item $\delta\left((w,k),\sigma\right) \eqdef \begin{cases} \left(w_{1} \dots w_{m-1}\sigma,k-1\right) & \text{if }k>0\\\left(w_{1} \dots w_{m-1}\sigma, g^S_\mathcal{A}(w)-1\right) & \text{if }k=0\end{cases}$
		\item $\mathcal{V} \eqdef \mathbb{N}$
		\item $v_{0} \eqdef 0$
		\item $\mathcal{E} \eqdef \set{1, \dots ,m}$
		\item $\eta_{\left(w,k\right)} \eqdef \begin{cases}0 & \text{if } k>0\\f^{S}_{\mathcal{A}}(w) & \text{if } k=0\end{cases} ~ ~ ~ ~ ~ \forall ~ (w,k) \in \mathcal{Q}$
		\item $\theta_{\left(w,k\right)}\left(v,e\right) \eqdef v+e ~ ~ ~ ~ ~ \forall ~ (w,k) \in \mathcal{Q}$
	\end{itemize}
\end{definition}

Note that we represent states as pairs of strings and natural numbers. 
The transition function is constructed such that, for a state $(w,k)$, the string $w$ corresponds to the last $m$ characters read by the automaton. The integer $k$ gives the number of characters that the automaton has to read until it reaches the end of the current window.
That is, $k=0$ if and only if the pattern matching algorithm would process the window that ends at the last-read text character.
Figure \ref{fig:DAA behavior} shows an example of how a DAA for the BMH algorithm moves from state to state.
	\begin{figure}
		\begin{center}
			\scalebox{0.9}{
				\begin{tikzpicture}[thick]
				  \tikzstyle{every state}=[fill=none,draw=black,text=black]
					\node[draw,minimum size=0.7cm]	(A)				at (0,0)	{\strut a};
					\node[draw,minimum size=0.7cm]	(B)				at (0.7,0)	{\strut b};
					\node[draw,minimum size=0.7cm]	(C)				at (1.4,0)	{\strut b};
					\node[draw,minimum size=0.7cm]	(D)				at (2.1,0)	{\strut a};
					\node[draw,minimum size=0.7cm]	(E)				at (2.8,0)	{\strut a};
					
					\node[draw=none,rotate=75] (I)		at (-0.7,1.1)	{$(aa,2)$};
					\node[draw=none,rotate=75]	(A1)	at (0,1.1)	{$(aa,1)$};
					\node[draw=none,rotate=75]	(B1)	at (0.7,1.1)	{$(ab,0)$};
					\node[draw=none,rotate=75]	(C1)	at (1.4,1.1)	{$(bb,1)$};
					\node[draw=none,rotate=75]	(D1)	at (2.1,1.1)	{$(ba,0)$};
					\node[draw=none,rotate=75]	(E1)	at (2.8,1.1)	{$(aa,0)$};
					
					\draw (-0.35,-0.7) -- (1.05,-0.7);
					\draw (-0.35,-0.7) -- (-0.35,-0.6);
					\draw (1.05,-0.7) -- (1.05,-0.6);
					
					\draw (1.05,-0.9) -- (2.45,-0.9);
					\draw (1.05,-0.9) -- (1.05,-0.8);
					\draw (2.45,-0.9) -- (2.45,-0.8);
					
					\draw (1.75,-1.1) -- (3.15,-1.1);
					\draw (1.75,-1.1) -- (1.75,-1);
					\draw (3.15,-1.1) -- (3.15,-1);
					
					\node[draw=none,align=left,text width=2cm]	at (-3,1.1)	{DAA state:};
					\node[draw=none,align=left,text width=2cm]	at (-3,0)		{Text:};
					\node[draw=none,align=left,text width=2cm]	at (-3,-0.8)	{Windows:};
					
					\node[draw=none,align=left,text width=5cm] at (8,1) {$g^{aa}_{\text{BMH}}(aa)=g^{aa}_{\text{BMH}}(ba)=1$};
					\node[draw=none,align=left,text width=5cm] at (8,0.5) {$g^{aa}_{\text{BMH}}(bb)=g^{aa}_{\text{BMH}}(ab)=2$};
					\node[draw=none,align=left,text width=5cm] at (8,0) {$f^{aa}_{\text{BMH}}(aa)=f^{aa}_{\text{BMH}}(ba)=2$};
					\node[draw=none,align=left,text width=5cm] at (8,-0.5) {$f^{aa}_{\text{BMH}}(bb)=f^{aa}_{\text{BMH}}(ab)=1$};
							
				\end{tikzpicture}}
		\end{center}
		\caption{Illustration of the behavior of the BMH algorithm when searching the text $T=abbaa$ for the pattern $S=aa$. On top, one sees the state the DAA takes after reading the character below. At the bottom, the windows considered by the BMH algorithm are indicated.}\label{fig:DAA behavior}
	\end{figure}
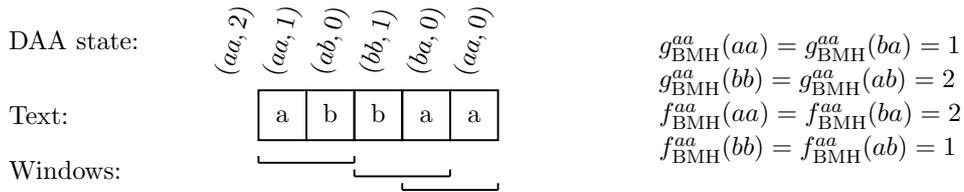
In \cite{MarschallR2011} it is proven that a DAA constructed according to Definition \ref{def:DAA encoding pattern matching algorithm} for an algorithm $\mathcal{A}$ computes the number of character accesses of $\mathcal{A}$ on a text $T$ for a pattern $S$:

\begin{lemma}[DAA correctness, \cite{MarschallR2011}]\label{lem:DAA calculates number of character comparisons}
	Let $\mathcal{A}$ be a pattern matching algorithm and let $\mathcal{D}$ be the DAA encoding $\mathcal{A}$ constructed according to Definition \ref{def:DAA encoding pattern matching algorithm} for pattern $S \in \Sigma^m$. Then $ value_{\mathcal{D}}\left(T\right) = f^S_{\mathcal{A}}\left(T\right)$ for all $T \in \Sigma^n$, where $f^S_\mathcal{A}(T)$ denotes the total cost incurred on $T$.
\end{lemma}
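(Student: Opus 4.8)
The plan is to prove the statement by induction on the text length $n=\length{T}$, strengthening it so that the induction hypothesis also describes the \emph{state} the DAA ends up in, not just the value. To set this up I would first make the window view of $\mathcal{A}$ explicit: given $S$ and $T\in\Sigma^n$, put $p_0\eqdef 0$ and, as long as $p_j+m\le n$, let $w^{(j)}\eqdef T_{p_j}\cdots T_{p_j+m-1}$ and $p_{j+1}\eqdef p_j+g^S_{\mathcal{A}}(w^{(j)})$; unwinding the definition of the total cost incurred by $\mathcal{A}$ on $T$ gives $f^S_{\mathcal{A}}(T)=\sum_{j:\,p_j+m\le n}f^S_{\mathcal{A}}(w^{(j)})$. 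For BM, BMH, B(N)DM, and BOM one has $1\le g^S_{\mathcal{A}}(w)\le m$ by inspection of $bc^S$, $gs^S$, and the DAWG/oracle shift functions; only $g^S_{\mathcal{A}}(w)\ge 1$, which makes the $p_j$ strictly increasing, is actually used. The strengthened claim is then that for every $T\in\Sigma^n$,
\[
\hat\delta\big((q_0,v_0),T\big)=\big((w_T,k_T),\,f^S_{\mathcal{A}}(T)\big),
\]
where, if $n<m$, we have $w_T=s_n\cdots s_{m-1}T_0\cdots T_{n-1}$ and $k_T=m-n$; and if $n\ge m$, then $w_T=T_{n-m}\cdots T_{n-1}$ is the last $m$ characters read, $k_T=0$ precisely when $T_{n-1}$ completes some window $w^{(j)}$, and otherwise $k_T=p_{j+1}+m-n$ for the largest $j$ with $p_j+m\le n$. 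The lemma is the value component of this invariant.

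The base case $n=0$ is immediate: $\hat\delta((q_0,v_0),\varepsilon)=(q_0,v_0)=((S,m),0)$ by Definition~\ref{def:joint transition function} and $f^S_{\mathcal{A}}(\varepsilon)=0$. For the inductive step I would write the text as $T\sigma$ with $\length{T}=n$, apply the induction hypothesis to $T$, and compute one application of $\tilde\delta$ using Definition~\ref{def:DAA encoding pattern matching algorithm}. The computation splits according to whether $k_T>0$ (the string component slides, the counter is decremented, no shift rule is invoked) or $k_T=0$ (the string slides and the counter is reset to $g^S_{\mathcal{A}}(w_T)-1$), and in each case one further asks whether the successor state's counter equals $0$, i.e.\ whether reading $\sigma$ completes a window. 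If it does not, the successor's emission is $0$, the value is unchanged, and indeed $f^S_{\mathcal{A}}(T\sigma)=f^S_{\mathcal{A}}(T)$, so it only remains to check the routine updates of $w_T$ and $k_T$. If it does, the successor state has the form $(w',0)$, the value grows by $f^S_{\mathcal{A}}(w')$, and one must verify that $w'$ equals the content $w^{(j+1)}$ of the algorithm's next window, so that $f^S_{\mathcal{A}}(T\sigma)-f^S_{\mathcal{A}}(T)=f^S_{\mathcal{A}}(w')$ as needed; here the recursion $p_{j+1}=p_j+g^S_{\mathcal{A}}(w^{(j)})$ hard-wired into the transition function is exactly what makes the two sides agree.

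I expect the main obstacle to be precisely this boundary bookkeeping: reconciling the DAA's one-character-at-a-time progression with $\mathcal{A}$'s one-window-at-a-time progression, and in particular showing that the string component always holds the content of the next window to be examined at the moment the counter reaches $0$. The degenerate cases — $n<m$ (no window processed yet, and $w$ still padded by a suffix of $S$), $m=1$, and $g^S_{\mathcal{A}}(w)=1$ (a window completed ``immediately'', with the counter reset to $0$) — must be treated explicitly but cause no real trouble. A cleaner way to organize the same argument is to isolate two self-contained claims and concatenate them along $p_0<p_1<\cdots$: a start-up claim (from $((S,m),0)$, reading $T_0\cdots T_{m-1}$ leads to $((w^{(0)},0),f^S_{\mathcal{A}}(w^{(0)}))$) and a single-window step (from $((w,0),v)$, reading the next $g^S_{\mathcal{A}}(w)$ characters leads to $((w',0),v+f^S_{\mathcal{A}}(w'))$, with $w'$ the content of the window shifted by $g^S_{\mathcal{A}}(w)$), after which the final value reads off as $\sum_j f^S_{\mathcal{A}}(w^{(j)})=f^S_{\mathcal{A}}(T)$.
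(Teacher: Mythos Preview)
The paper does not actually prove this lemma: it is quoted from \cite{MarschallR2011} and only the statement is reproduced, so there is no in-paper argument to compare against. Your plan---strengthening the claim to an invariant on the full state--value pair $\hat\delta((q_0,v_0),T)$ and inducting on $\length{T}$---is the natural approach and is correct; it is also essentially the template the paper itself uses later in the proof of Theorem~\ref{thm:DAA on substrings calculates number of character comparisons}, where a strengthened invariant on state and value is proved by induction on $n$.

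Two minor remarks. First, your ``otherwise'' formula $k_T=p_{j+1}+m-n$ (for the largest $j$ with $p_j+m\le n$) does not specialize to $0$ at a window boundary $n=p_j+m$: there it evaluates to $g^S_{\mathcal{A}}(w^{(j)})\ge 1$. You already separate the $k_T=0$ case explicitly, so this is fine, but be careful in the induction step that the handover between the two descriptions---precisely the $k_T=0\to g^S_{\mathcal{A}}(w_T)-1$ reset---is argued cleanly. Second, the alternative organization you sketch at the end (a start-up claim followed by a single-window step chained along $p_0<p_1<\cdots$) is indeed the tidier way to write this and matches directly how the DAA of Definition~\ref{def:DAA encoding pattern matching algorithm} is designed; I would lead with that version.
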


\subsection{Finite-Memory Text Models}
We seek to study the distribution of $value_{\mathcal{D}}\left(T\right)$ when $T$ is a random text.
A random text is a stochastic process $\left(T_i\right)_{i \in \mathbb{N}_0}$, where each $T_i$ takes values in a finite alphabet $\Sigma$. 
A prefix  $T_0 \dots T_{n-1}$ is called a random text of length $n$. 
A text model specifies the probability $\Pr\left[T_0 \dots T_{|x|-1} = x\right]$ for all strings $x \in \Sigma^*$.
One way of doing this is the following.

\begin{definition}[Finite-memory text model]
	A \emph{finite-memory text model} is a tuple $\left(\mathcal{C},c_0,\Sigma,\varphi\right)$, where $\mathcal{C}$ is a finite state space (called context space), $c_0 \in \mathcal{C}$ is a start context, $\Sigma$ is an alphabet and $\varphi: \mathcal{C} \times \Sigma \times \mathcal{C} \rightarrow \left[0,1\right]$ is a transition function with $\sum_{\sigma \in \Sigma, c' \in \mathcal{C}}{\varphi\left(c,\sigma,c'\right)}=1$ for all $c \in \mathcal{C}$. 
\end{definition}

Intuitively, a finite-memory text model $\mathcal{M} = \left(\mathcal{C}, c_0,\Sigma,\varphi\right)$ generates a random text by moving from context to context and emitting a character at each transition. Here, $\varphi\left(c,\sigma,c'\right)$ denotes the probability that $\mathcal{M}$ does a transition from $c$ to $c'$ and thereby generates the character $\sigma$.
In this way the stochastic process $\left(T_i\right)_{i \in \mathbb{N}_0}$ is fully defined (by Kolmogorov's extension theorem).
Markovian models of arbitrary order as well as character-emitting hidden Markov models (HMMs) can be expressed as a finite-memory text model.

\subsection{Probabilistic Arithmetic Automata}
Next, we want to combine a DAA with a finite-memory text model in order to study the DAA's behavior on random texts.
To this end, we use the probabilistic counterpart of DAAs, namely probabilistic arithmetic automata (PAAs).
PAAs can be viewed as (discrete-time) hidden Markov models that additionally perform arithmetic operations on the emission sequence.


\begin{definition}[Probabilistic Arithmetic Automaton, \cite{Marschall2012}]
	A \emph{probabilistic arithmetic automaton (PAA)} is a tuple
	$\mathcal{P} = \left( \mathcal{Q}, q_{0}, \delta_T, \mathcal{V}, v_{0}, \mathcal{E}, \mu = \left(\mu_{q}\right)_{q \in \mathcal{Q}}, \theta = \left(\theta_{q}\right)_{q \in \mathcal{Q}}\right)$,
	where $\mathcal{Q}$ denotes a finite set of states, $q_0 \in \mathcal{Q}$ is the start state, $\delta_T: \mathcal{Q} \times \mathcal{Q} \rightarrow \left[0,1\right]$ is a probabilistic transition function, $\mathcal{V}$ is a set of values, $v_0$ is called the start value, $\mathcal{E}$ is a finite set of emissions, $\mu_q$ is a state-specific probability distribution on $\mathcal{E}$, and $\theta_q$ is a state-specific binary operation.
\end{definition}

Similar to the concept of deterministic finite automata, a PAA starts in its start state $q_0$. In contrast to DFAs and DAAs, the transitions are not initiated by input symbols but are purely probabilistic. The transition function $\delta_T$ gives the probability of moving from one state to another. Analogous to DAAs, a PAA performs calculations on $\mathcal{V}$ during a transition. The first value used for such a calculation is the start value $v_0$. The entered state generates an emission according to $\mu_q$ (just as in an HMM) which is used to calculate the resulting value $v_i$ from the former value $v_{i-1}$ according to the binary operation $\theta_q$.

\paragraph*{PAA induced by DAA and Finite-Memory Text Model}

The construction of a PAA from a DAA and a finite-memory text model allows us to compute the probability distribution of the values produced by the DAA on a random text generated by the finite-memory text model.

\begin{definition}[PAA induced by a DAA and a finite-memory text model, \cite{Marschall2012}]\label{def:PAA induced by DAA and text model}
	Let a DAA $\mathcal{D} = \left( \mathcal{Q}, q_{0}, \Sigma, \delta, \mathcal{V}, v_{0}, \mathcal{E}, \left(\eta_{q}\right)_{q \in \mathcal{Q}}, \left(\theta_{q}\right)_{q \in \mathcal{Q}}\right)$ and a finite-memory text model $\mathcal{M} = \left(\mathcal{C},c_0,\Sigma,\varphi\right)$ be given. We define the \emph{PAA induced by $\mathcal{M}$ and $\mathcal{D}$} to have the state space $\mathcal{Q} \times \mathcal{C}$, start state $\left(q_0,c_0\right)$ and a transition function given by $\delta_T\left(\left(q,c\right),\left(q',c'\right)\right) \eqdef \sum_{\sigma \in \Sigma: \delta\left(q,\sigma\right)=q'}{\varphi\left(c,\sigma,c'\right)}$; all other ingredients, i.e.\ value set, start value, emission set and function, as well as operations, are inherited from the DAA.
\end{definition}

In \cite{MarschallR2011} it is shown that a PAA constructed according to Definition \ref{def:PAA induced by DAA and text model} with a DAA $\mathcal{D}$ and a finite-memory text model $\mathcal{M}$ reflects the probabilistic behavior of $\mathcal{D}$ on a random text generated by $\mathcal{M}$. Thus, using the DAA encoding a pattern matching algorithm $\mathcal{A}$ for a pattern $S$, we can compute the character access count distribution of $\mathcal{A}$ on $S$. From this, it is possible to deduce the following space and runtime bounds, which are proven in \cite{MarschallR2011}.

\begin{theorem}[Computing $\mathcal{L}\left(X^{\mathcal{A},S}_n\right)$, \cite{MarschallR2011}]\label{thm:time space construction PAA}
	Let a window based pattern matching algorithm $\mathcal{A}$, a pattern $S$ with $|S| = m$, the functions $g^S_\mathcal{A}$ and $f^S_\mathcal{A}$, a DAA $\mathcal{D}$ with state space $\mathcal{Q}$ encoding $\mathcal{A}$ according to Definition \ref{def:DAA encoding pattern matching algorithm} and a finite-memory text model $\mathcal{M} = \left(\mathcal{C}, c_0, \Sigma, \varphi\right)$ be given. The cost distribution $\mathcal{L}\left(X^{\mathcal{A},S}_n\right)$ of $\mathcal{D}$ for pattern $S$ on a text of length $n$ can be computed using
	$O(n^2 \cdot m \cdot |\mathcal{Q}| \cdot |\mathcal{C}|^2 \cdot |\Sigma|)$ 
	time and 
	$O(|\mathcal{Q}| \cdot |\mathcal{C}| \cdot n \cdot m)$
	space.
\end{theorem}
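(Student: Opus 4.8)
The plan is to run the standard forward (sum--product) recursion on the PAA induced by the DAA $\mathcal{D}$ and the text model $\mathcal{M}$ (Definition~\ref{def:PAA induced by DAA and text model}), maintaining at each time step a table of the joint distribution of (PAA state, accumulated value). Correctness is immediate from results already in hand: by Lemma~\ref{lem:DAA calculates number of character comparisons} we have $value_{\mathcal{D}}(T)=f^S_{\mathcal{A}}(T)$, which for a random text $T$ of length $n$ is exactly $X^{\mathcal{A},S}_n$; and, as recalled just before the theorem, the PAA induced by $\mathcal{D}$ and $\mathcal{M}$ reflects the probabilistic behaviour of $\mathcal{D}$ on a random text generated by $\mathcal{M}$ \cite{MarschallR2011}. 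Hence the distribution of the value produced by that PAA after $n$ transitions equals $\mathcal{L}\bigl(X^{\mathcal{A},S}_n\bigr)$, and it suffices to compute the latter.

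First I would record the key boundedness observation: every emission $\eta_q$ lies in $\{0,\dots,m\}$ and every operation $\theta_q$ is addition, so after $t$ transitions the accumulated value lies in $\{0,1,\dots,tm\}$; in particular at most $O(nm)$ values are ever reachable. Let $p_t\colon(\mathcal{Q}\times\mathcal{C})\times\{0,\dots,tm\}\to[0,1]$ be defined so that $p_t\bigl((q,c),v\bigr)$ is the probability that the induced PAA, after $t$ transitions, is in state $(q,c)$ with value $v$. Then $p_0$ is the point mass at $\bigl((q_0,c_0),0\bigr)$, and, since the emission of the induced PAA in a state with first component $q'$ is the deterministic value $\eta_{q'}$, one has the recurrence
\[
p_{t+1}\bigl((q',c'),v'\bigr)\;=\;\sum_{(q,c)\in\mathcal{Q}\times\mathcal{C}} p_t\bigl((q,c),\,v'-\eta_{q'}\bigr)\cdot\delta_T\bigl((q,c),(q',c')\bigr),
\]
with the convention $p_t\bigl((q,c),v\bigr)\eqdef 0$ for $v<0$. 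After $n$ unrollings, $\Pr\bigl[X^{\mathcal{A},S}_n=v\bigr]=\sum_{(q,c)}p_n\bigl((q,c),v\bigr)$, which is the sought distribution.

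For the resource bounds the crucial point is to evaluate the recurrence in the ``push'' direction and to charge the work to the sparse description of $\delta_T$. For each source triple $(q,c,v)$ with nonzero $p_t$-entry, instead of ranging over all targets $(q',c')$ one ranges over pairs $(\sigma,c')\in\Sigma\times\mathcal{C}$, adding $p_t\bigl((q,c),v\bigr)\cdot\varphi(c,\sigma,c')$ to the entry of $p_{t+1}$ at state $\bigl(\delta(q,\sigma),c'\bigr)$ and value $v+\eta_{\delta(q,\sigma)}$; by Definition~\ref{def:PAA induced by DAA and text model} summing over the $\sigma$ that map to a common $q'$ exactly reconstitutes $\delta_T$. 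At step $t$ there are at most $|\mathcal{Q}|\cdot|\mathcal{C}|\cdot(tm+1)$ source triples and $O(|\Sigma|\cdot|\mathcal{C}|)$ work per triple, so step $t$ costs $O\bigl(|\mathcal{Q}|\cdot|\mathcal{C}|^2\cdot|\Sigma|\cdot tm\bigr)$; summing over $t=0,\dots,n-1$ (and using $\sum_t t=O(n^2)$, with arithmetic operations counted at unit cost) yields $O\bigl(n^2\cdot m\cdot|\mathcal{Q}|\cdot|\mathcal{C}|^2\cdot|\Sigma|\bigr)$. For space, one keeps only the current table $p_t$ and a working copy for $p_{t+1}$, each indexed by $(\mathcal{Q}\times\mathcal{C})\times\{0,\dots,nm\}$, giving $O\bigl(|\mathcal{Q}|\cdot|\mathcal{C}|\cdot n\cdot m\bigr)$.

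The main obstacle is bookkeeping in the complexity analysis rather than any conceptual difficulty. Two points must be handled carefully: (i) one must use the \emph{per-step} value bound $v\le tm$, not just $v\le nm$, so that the time sum telescopes to $n^2m$ rather than $n^3m$; and (ii) the transition work must be charged to the $\Sigma\times\mathcal{C}$ outgoing edges induced by $\varphi$ rather than to the $|\mathcal{Q}|\cdot|\mathcal{C}|$ potential PAA successors, which is precisely what produces the $|\mathcal{C}|^2\cdot|\Sigma|$ factor (a naive ``pull'' evaluation of $\delta_T$ would instead give a worse $|\mathcal{Q}|^2\cdot|\mathcal{C}|^2\cdot|\Sigma|$). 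The remaining ingredients --- well-definedness of $p_t$, the base case $p_0$, the validity of the recurrence, and the final marginalisation --- are routine given Lemma~\ref{lem:DAA calculates number of character comparisons} and the PAA/DAA correspondence of \cite{MarschallR2011}.
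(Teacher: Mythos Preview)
The paper does not actually give a proof of this theorem: it is stated as a result of \cite{MarschallR2011}, with the sentence ``From this, it is possible to deduce the following space and runtime bounds, which are proven in \cite{MarschallR2011}'' and no further argument. So there is nothing in the paper to compare your proof against line by line.

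That said, your reconstruction is correct and is precisely the argument one expects behind this citation. The forward recursion on the induced PAA with the joint table $p_t\bigl((q,c),v\bigr)$ is the standard way to compute such value distributions, and you have handled the two points that actually drive the stated bounds: (i) using the per-step value bound $v\le tm$ so that summing over $t$ gives the $n^2$ factor, and (ii) evaluating transitions in push form over $(\sigma,c')\in\Sigma\times\mathcal{C}$, which via Definition~\ref{def:PAA induced by DAA and text model} exactly reproduces $\delta_T$ and yields the $|\mathcal{C}|^2\cdot|\Sigma|$ factor instead of an extra $|\mathcal{Q}|$. The correctness chain via Lemma~\ref{lem:DAA calculates number of character comparisons} and the PAA/DAA correspondence is also as in \cite{MarschallR2011}. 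In short, your proposal is a faithful and complete proof of the cited result.
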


Since the size of the state space of the DAA calculating the number of character accesses of an algorithm $\mathcal{A}$ on a text $T$ for a pattern $S$ is exponential in the length of $S$, also the space and runtime bounds for computing $\mathcal{L}\left(X^{\mathcal{A},S}_n\right)$ are exponential in the length of $S$.
In the following, we construct a DAA with $O(m^3)$ states.

\section{General construction scheme for DAAs}\label{ch:scheme}
	
In this section, we introduce a general construction scheme for deterministic arithmetic automata that later allows us to construct small DAAs for BM, BMH, B(N)DM, and BOM in a unified manner.
The main idea is to merge states for windows that induce the same shifts and lead to the same costs.
Such sets of windows with the same behavior are represented by only one state.
To this end, we choose a suitable \emph{set of representative strings} $\mathcal{R}$ and define a function $\rep_\mathcal{R}:\Sigma^m\to\mathcal{R}$ that maps each window to its representative.
\begin{definition}[Representative mapping]
For a given set $\mathcal{R}\subset\Sigma^*$ we define $\rep_\mathcal{R}(a)$ to map $a$ to the longest suffix of $a$ that is in $\mathcal{R}$.
\end{definition}
In order for this mapping to be well-defined, we have to ensure that $\mathcal{R}$ contains at least one suffix of every window $a\in\Sigma^m$.
Furthermore, it has to permit a transition function.
Formally, a valid set of window representatives has the following properties:
\begin{definition}[Window representatives]\label{def:window representatives}
A finite set of strings $\mathcal{R}\subset\Sigma^*$ is called \emph{set of window representatives} of length $m$ if
\begin{enumerate}
 \item $\rep_\mathcal{R}$ is well-defined, i.e., every $a\in\Sigma^m$ has a suffix $a_k\dots a_{m-1}$ that is in $\mathcal{R}$ and
 \item there exists a transition function $\delta_\mathcal{R}:\mathcal{R}\times\Sigma\to\mathcal{R}$ such that $\delta_\mathcal{R}(\rep_\mathcal{R}(a),\sigma)=\rep_\mathcal{R}(a_1\dots a_{m-1}\sigma)$ for all $a\in\Sigma^m$ and $\sigma\in\Sigma$.
\end{enumerate}
\end{definition}

Additionally, the set of representatives has to be compatible with shift and cost functions of algorithm $\mathcal{A}$ under study.
\begin{definition}[Compatible window representatives]
Let an algorithm $\mathcal{A}$ and the associated functions $f_\mathcal{A}^S$ and $g_\mathcal{A}^S$ for a pattern $S\in\Sigma^m$ be given.
A set of window representatives $\mathcal{R}$ of length $m$ is called \emph{compatible} with algorithm $\mathcal{A}$ and pattern $S\in\Sigma^m$ if $f_\mathcal{A}^S(a)=f_\mathcal{A}^S(a')$ and $g_\mathcal{A}^S(a)=g_\mathcal{A}^S(a')$ for all $a,a'\in\Sigma^m$ with $\rep_\mathcal{R}(a)=\rep_\mathcal{R}(a')$. For $c \in \mathcal{R}$ define $f_\mathcal{A}^S(c) \eqdef f_\mathcal{A}^S(a)$ and $g_\mathcal{A}^S(c) \eqdef g_\mathcal{A}^S(a)$ for an arbitrary $a\in\Sigma^m$ with $\rep_\mathcal{R}(a)=c$.
\end{definition}

The set of all windows as well as the set of all substrings of the pattern (including the empty string $\varepsilon$) satisfy the properties of a set of window representatives stated in Definition \ref{def:window representatives} as we will show later.

\begin{example}
	Let $\Sigma = \set{a,b}$ and $m=2$ and let the set of all substrings of the pattern $aa$ (including the empty string $\varepsilon$) be the set of window representatives $\mathcal{R}$. Then $\rep_\mathcal{R}(aa)=aa$, $\rep_\mathcal{R}(ba)=a$ and $\rep_\mathcal{R}(bb)=\rep_\mathcal{R}(ab)=\varepsilon$.
\end{example}

Using the notion of window representatives, we define the general construction scheme for deterministic arithmetic automata as

\begin{definition}[Construction scheme for DAAs using window representatives]\label{def:general construction scheme}
	Let $\mathcal{A}$ be an algorithm with associated functions $f_\mathcal{A}^S$ and $g_\mathcal{A}^S$ for a pattern $S \in \Sigma^m$. Let $\mathcal{R}$ be a set of window representatives that is compatible with $\mathcal{A}$ and $S$ with associated transition function $\delta_\mathcal{R}$. We define the general construction scheme for DAAs as
	\begin{itemize}
		\item $\mathcal{Q} \eqdef \mathcal{R} \times \set{0, \dots ,m}$
		\item $q_{0} \eqdef \left(\rep_\mathcal{R}(S),m\right)$
		\item $\delta\left(\left(r,k\right),\sigma\right) \eqdef \begin{cases} \left(\delta_\mathcal{R}(r,\sigma),k-1\right) & \text{if }k>0\\
		\left(\delta_\mathcal{R}(r,\sigma),g^S_{\mathcal{A}}(r)-1\right) & \text{if }k=0\end{cases}$
		\item $\mathcal{V} \eqdef \mathbb{N}$
		\item $v_{0} \eqdef 0$
		\item $\mathcal{E} \eqdef \set{1, \dots ,m}$
		\item $\eta_{\left(r,k\right)} \eqdef \begin{cases}0 & \text{if } k>0\\f^{S}_{\mathcal{A}}(c) & \text{if } k=0\end{cases} ~ ~ ~ ~ ~ \forall ~ (r,k) \in \mathcal{Q}$
		\item $\theta_{\left(r,k\right)}\left(v,e\right) \eqdef v+e ~ ~ ~ ~ ~ \forall ~ (r,k) \in \mathcal{Q}$
	\end{itemize}
\end{definition}
Note that, if we choose all windows to be the set of representatives $\mathcal{R}$ with transition function $\delta_\mathcal{R}(\rep_\mathcal{R}(a),\sigma) = a_1 \dots a_{m-1}\sigma$, the construction scheme yields the same DAA as defined in the direct construction in Definition \ref{def:DAA encoding pattern matching algorithm}.

To prove the correctness of the general construction scheme, we have to show that for an algorithm $\mathcal{A}$ and a pattern $S\in\Sigma^m$ a DAA $\mathcal{D}$ constructed according to it calculates the number of character accesses $\mathcal{A}$ does on a text $T$ for $S$. Since we know from Lemma \ref{lem:DAA calculates number of character comparisons} that a DAA $\mathcal{D'}$ for $\mathcal{A}$ and pattern $S$ constructed according to Definition \ref{def:DAA encoding pattern matching algorithm} calculates the correct result, it suffices to show that $\mathcal{D}$ and $\mathcal{D'}$ calculate the same value. 
%

	Using the previous lemma, we are now able to prove the correctness of the DAA construction scheme in Definition \ref{def:general construction scheme} next.
	
	\begin{theorem}\label{thm:DAA on substrings calculates number of character comparisons}
		Let an algorithm $\mathcal{A}$ and a pattern $S\in\Sigma^m$ be given. Let $\mathcal{D}$ be a DAA constructed according to Definition \ref{def:DAA encoding pattern matching algorithm} for $\mathcal{A}$ and $S$ with state space $\mathcal{Q}$ and transition function $\delta$. Let $\mathcal{D'}$ be constructed according to the general construction scheme from Definition \ref{def:general construction scheme} with the set of window representatives $\mathcal{R}$ for $\mathcal{A}$ and $S$ with transition function $\delta'$. Then $value_\mathcal{D}(T) = value_\mathcal{D'}(T)$ for all $T \in \Sigma^n$.	
	\end{theorem}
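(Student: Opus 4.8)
The plan is to prove the two DAAs compute the same value by exhibiting a tight correspondence between their runs. Concretely, I would show by induction on the length of the input that after reading any prefix $x$ of $T$, the two automata are in ``matching'' configurations: if $\hat\delta((q_0,v_0),x) = ((w,k),v)$ in $\mathcal{D}$ and $\hat{\delta'}((q_0',v_0),x) = ((r,k'),v')$ in $\mathcal{D'}$, then $k = k'$, $r = \rep_\mathcal{R}(w)$, and $v = v'$. The first two conjuncts say the ``control'' part of the state stays in lockstep up to the quotient map $\rep_\mathcal{R}$; the third says the accumulated values never diverge. Since the final value is read off from the value component, $value_\mathcal{D}(T) = value_{\mathcal{D'}}(T)$ follows immediately from the $v = v'$ part applied to $x = T$.

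For the induction, the base case $x = \varepsilon$ is immediate: $\mathcal{D}$ starts in $(S,m)$ with value $0$, $\mathcal{D'}$ starts in $(\rep_\mathcal{R}(S),m)$ with value $0$, and $m = m$, $\rep_\mathcal{R}(S) = \rep_\mathcal{R}(S)$. For the step, suppose the invariant holds after $x$ and we read $\sigma$. The counter update is identical in both machines in the case $k > 0$ (both decrement to $k-1$); in the case $k = 0$, $\mathcal{D}$ sets the counter to $g^S_\mathcal{A}(w) - 1$ and $\mathcal{D'}$ sets it to $g^S_\mathcal{A}(r) - 1 = g^S_\mathcal{A}(\rep_\mathcal{R}(w)) - 1$, and these agree because $\mathcal{R}$ is \emph{compatible} with $\mathcal{A}$ and $S$ (so $g^S_\mathcal{A}$ is constant on each $\rep_\mathcal{R}$-fiber, and by definition $g^S_\mathcal{A}(\rep_\mathcal{R}(w)) = g^S_\mathcal{A}(w)$). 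For the string component, $\mathcal{D}$ moves to $w_1\dots w_{m-1}\sigma$ while $\mathcal{D'}$ moves to $\delta_\mathcal{R}(r,\sigma) = \delta_\mathcal{R}(\rep_\mathcal{R}(w),\sigma)$; the defining property of a set of window representatives (Definition~\ref{def:window representatives}, item 2) gives $\delta_\mathcal{R}(\rep_\mathcal{R}(w),\sigma) = \rep_\mathcal{R}(w_1\dots w_{m-1}\sigma)$, so the new representative is exactly the image under $\rep_\mathcal{R}$ of the new string, as required. Finally for the value: both machines add the emission of the \emph{successor} state via $\theta(v,e) = v + e$; when the new counter is positive the emission is $0$ in both; when the new counter is $0$ (which, by the counter-agreement just established, happens simultaneously in both), $\mathcal{D}$ adds $f^S_\mathcal{A}(w')$ for its new string $w'$ and $\mathcal{D'}$ adds $f^S_\mathcal{A}(r') = f^S_\mathcal{A}(\rep_\mathcal{R}(w'))$, and again compatibility of $\mathcal{R}$ forces these to be equal. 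Since $v = v'$ before the step, $v + e = v' + e$ after it.

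The one subtlety worth isolating is that the counter reaching $0$ is what triggers consultation of $g^S_\mathcal{A}$ and $f^S_\mathcal{A}$, so the argument has to be organized so that ``the counters agree'' is established \emph{before} invoking compatibility; since the counter in the new state of $\mathcal{D'}$ is defined in terms of the \emph{old} string's representative, and the invariant gives us $r = \rep_\mathcal{R}(w)$ for the old pair, this ordering works out cleanly. I expect the main obstacle to be purely bookkeeping: keeping straight which state's emission/shift is being used (the framework updates the value using the emission of the \emph{target} state, not the source), and making sure the indexing of substrings ($w_1\dots w_{m-1}\sigma$ versus $a_1\dots a_{m-1}\sigma$ in Definition~\ref{def:window representatives}) is consistent throughout. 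There is no deep difficulty: the whole content is that $\rep_\mathcal{R}$ is a homomorphism-like quotient that commutes with the ``shift-in-$\sigma$'' operation (item 2) and that $f^S_\mathcal{A}$, $g^S_\mathcal{A}$ descend to the quotient (compatibility), which is precisely what the definitions were engineered to guarantee.
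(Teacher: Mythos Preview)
Your proposal is correct and follows essentially the same approach as the paper: both strengthen the claim to an invariant asserting that after reading any prefix the two automata agree on the counter, on the value, and on the string component up to $\rep_\mathcal{R}$, and both prove this by induction using compatibility for $f^S_\mathcal{A}$ and $g^S_\mathcal{A}$ and property~2 of Definition~\ref{def:window representatives} for the string transition. Your write-up is in fact more explicit than the paper's (which calls the induction ``straight-forward'' and only spells out the key transition identity $r' = \delta_\mathcal{R}(\rep_\mathcal{R}(w),\sigma) = \rep_\mathcal{R}(w_1\dots w_{m-1}\sigma)$), and your remark about establishing counter agreement before invoking compatibility is a correct and useful clarification of the bookkeeping.
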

	\begin{proof}
 		Let $T \in \Sigma^n$. To prove $value_\mathcal{D}(T) = value_\mathcal{D'}(T)$, we strengthen the statement: Let $\hat{\delta}\left(\left(q_0,v_0\right),T\right) = \left(\left(r,k\right),v\right)$ and $\hat{\delta}'\left(\left(q'_0,v'_0\right),T\right) = \left(\left(w,k'\right),v'\right)$. Then $v=v'$, $r = \rep_\mathcal{R}(w)$ and $k=k'$. This claim can be proven by a straight-forward induction on $n$ using the compatibility of $\mathcal{R}$ with $\mathcal{A}$ and $S$ and the following fact: 
 		For arbitrary states $(w,k)$ and $(r,k)$ from $D$ and $D'$, respectively, with $r=\rep_\mathcal{R}(w)$, $\delta((w,k),\sigma)=(w',k')$ and $\delta'((r,k),\sigma)=(r',k'')$, we have $k'=k''$ by the compatibility of $\mathcal{R}$ with $\mathcal{A}$ and $S$, and \[r' = \delta_\mathcal{R}(r,\sigma) = \delta_\mathcal{R}(\rep_\mathcal{R}(w),\sigma) = \rep_\mathcal{R}(w_1 \dots w_{m-1} \sigma) = \rep_\mathcal{R}(w')\] by the definitions of $\delta$ and $\delta'$ and by the second property of sets of window representatives.
	\end{proof}

\paragraph*{Window representatives for BM, BMH, and B(N)DM}

We claim that the set of all substrings $\subcl{S} = \set{s_i \dots s_j \where 0 \leq i \leq j \leq m-1}$ of a pattern $S\in\Sigma^m$ is a set of window representatives that is compatible with BM, BMH, and B(N)DM. First, we prove that $\subcl{S}$ satisfies the properties of a set of window representatives. Second, we examine the compatibility with BM, BMH, and B(N)DM separately.

\begin{lemma}
Given a pattern $S\in\Sigma^m$, $\subcl{S}$ is a set of window representatives.
\end{lemma}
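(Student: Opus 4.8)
The plan is to verify the two conditions in Definition~\ref{def:window representatives} directly for $\mathcal{R}=\subcl{S}$. First I would note that $\varepsilon\in\subcl{S}$ (the empty string is a substring of $S$), so condition~1 is immediate: every window $a\in\Sigma^m$ has the suffix $\varepsilon$, which lies in $\subcl{S}$, hence $\rep_{\subcl{S}}$ is well-defined and picks out the longest suffix of $a$ that is a substring of $S$.

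The substance is condition~2: exhibiting a transition function $\delta_{\subcl{S}}:\subcl{S}\times\Sigma\to\subcl{S}$ with $\delta_{\subcl{S}}(\rep_{\subcl{S}}(a),\sigma)=\rep_{\subcl{S}}(a_1\dots a_{m-1}\sigma)$ for all $a\in\Sigma^m$, $\sigma\in\Sigma$. The idea is that $\rep_{\subcl{S}}(a_1\dots a_{m-1}\sigma)$ should depend only on $\rep_{\subcl{S}}(a)$ and $\sigma$, not on all of $a$. Concretely, for $r\in\subcl{S}$ and $\sigma\in\Sigma$, define $\delta_{\subcl{S}}(r,\sigma)$ to be the longest suffix of $r\sigma$ that still lies in $\subcl{S}$. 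I would then argue this is the right map by a suffix-closure/monotonicity argument: the longest substring-of-$S$ suffix of $a_1\dots a_{m-1}\sigma$ always ends in $\sigma$ (or is $\varepsilon$), so write it as $u\sigma$ with $u$ a suffix of $a_1\dots a_{m-1}$, hence a suffix of $a$; since $u\sigma\in\subcl{S}$ we have $u\in\subcl{S}$ (substrings are closed under taking prefixes), so $u$ is a suffix of $a$ lying in $\subcl{S}$, whence $u$ is a suffix of $\rep_{\subcl{S}}(a)=r$ (by maximality of $\rep$, any shorter substring-of-$S$ suffix of $a$ is a suffix of the longest one). Therefore $u\sigma$ is a suffix of $r\sigma$ lying in $\subcl{S}$, so it is a suffix of $\delta_{\subcl{S}}(r,\sigma)$. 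For the reverse inclusion, $\delta_{\subcl{S}}(r,\sigma)$ is a suffix of $r\sigma$, and $r$ is a suffix of $a$, so $\delta_{\subcl{S}}(r,\sigma)$ is a suffix of $a_1\dots a_{m-1}\sigma$ lying in $\subcl{S}$, hence a suffix of the longest such, namely $\rep_{\subcl{S}}(a_1\dots a_{m-1}\sigma)=u\sigma$. Two strings each a suffix of the other are equal, giving $\delta_{\subcl{S}}(r,\sigma)=\rep_{\subcl{S}}(a_1\dots a_{m-1}\sigma)$.

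The key structural facts I am leaning on are: (i) $\subcl{S}$ is closed under taking suffixes (indeed under taking arbitrary substrings), and (ii) among the suffixes of a fixed string that lie in $\subcl{S}$, the longest one has all the others as suffixes — so "being a suffix of $a$ in $\subcl{S}$" is equivalent to "being a suffix of $\rep_{\subcl{S}}(a)$". The main obstacle, such as it is, is packaging these two observations cleanly so that the equality $\delta_{\subcl{S}}(r,\sigma)=\rep_{\subcl{S}}(a_1\dots a_{m-1}\sigma)$ falls out by mutual-suffix antisymmetry; there is no real difficulty, just care in tracking which string is a suffix of which. I would also remark that $|\subcl{S}|=O(m^2)$, which is what later yields the $O(m^3)$ state-space bound once the product with $\{0,\dots,m\}$ is taken.
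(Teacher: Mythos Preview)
Your proposal is correct and takes the same approach as the paper: both verify condition~1 via $\varepsilon\in\subcl{S}$ and define $\delta_{\subcl{S}}(r,\sigma)$ as the longest suffix of $r\sigma$ lying in $\subcl{S}$. The paper then simply asserts that the required identity ``follows directly'' from the definition of $\rep_{\subcl{S}}$, whereas you spell out the mutual-suffix argument (using suffix-closure of $\subcl{S}$ and the fact that every substring-of-$S$ suffix of $a$ is a suffix of $\rep_{\subcl{S}}(a)$); your version is more detailed but not a different route.
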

\begin{proof}
Since the empty string $\varepsilon$ is contained in $\subcl{S}$, $\rep_{\subcl{S}}$ is well-defined. Let $r \in \subcl{S}$ and define $\delta_{\subcl{S}}(r,\sigma)$ to be the longest suffix of $r\sigma$ that is a substring of $S$. With the definition of $\rep_{\subcl{S}}$, $\delta_{\subcl{S}}(r,\sigma) = \rep_{\subcl{S}}(r_1 \dots r_{m-1}\sigma)$ follows directly.
\end{proof}

Next, we consider the compatibility of $\subcl{S}$ with BM, BMH and B(N)DM.

\begin{lemma}\label{lem:compatible with BM}
Let a pattern $S\in\Sigma^m$ and $a, a' \in \Sigma^m$ with $\rep_{\subcl{S}}(a)=\rep_{\subcl{S}}(a')=r$ be given. Then $f^S_{\textnormal{BM}}(a)=f^S_{\textnormal{BM}}(a')$ and $g^S_{\textnormal{BM}}(a)=g^S_{\textnormal{BM}}(a')$.
\end{lemma}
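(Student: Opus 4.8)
The plan is to show that the cost and shift of a BM-window depend only on its longest suffix that is a substring of $S$. The key observation is that both $f^S_{\text{BM}}$ and $g^S_{\text{BM}}$ are determined entirely by information that "lives inside" this suffix: the length of the matched suffix of the window and which character caused the mismatch.

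First I would analyze the cost function. Recall $f^S_{\text{BM}}(a)$ is $m$ if $a=S$ and otherwise the index $i$ of the first (rightmost) mismatch, i.e.\ the length of the longest suffix $a_{m-i+1}\dots a_{m-1}$ of $a$ that equals the corresponding suffix $s_{m-i+1}\dots s_{m-1}$ of $S$, plus one. Now if $\rep_{\subcl{S}}(a)=r$ with $\length{r}=\ell$, then $r=a_{m-\ell}\dots a_{m-1}$ is a substring of $S$ but $a_{m-\ell-1}\dots a_{m-1}$ is not (or $\ell=m$, in which case $a=S$ since the only length-$m$ substring of $S$ is $S$ itself). Since every suffix of $S$ is in particular a substring of $S$, the matched suffix length $i-1$ of the window satisfies $i-1\le\ell$; moreover the matched suffix $a_{m-i+1}\dots a_{m-1}=s_{m-i+1}\dots s_{m-1}$ is a suffix of $r$, and the mismatch character $a_{m-i}$ is likewise the character of $r$ (or of $S$ itself, when $i-1=\ell$ — here I should be careful and argue that the mismatch character is still recoverable from $r$ together with $S$). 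So $r$ together with $S$ determines $i$, hence $f^S_{\text{BM}}(a)$. I would then do the induction-style bookkeeping: if $\rep_{\subcl{S}}(a)=\rep_{\subcl{S}}(a')=r$ then the matched suffixes of $a$ and $a'$ against $S$ coincide and the mismatch characters coincide, so $f^S_{\text{BM}}(a)=f^S_{\text{BM}}(a')$.

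Next I would turn to the shift $g^S_{\text{BM}}(a)=\max\{gs^S(a,i),bc^S(a,i)\}$ where $i=f^S_{\text{BM}}(a)$ when $a\neq S$ (and handle $a=S$ separately, where $i=m$). Having established $i$ depends only on $r$, it remains to show $bc^S(a,i)$ and $gs^S(a,i)$ do too. Both quantities are defined purely in terms of $S$ (fixed), the index $i$, the mismatch character $a_{m-i}$, and the matched suffix $a_{m-i+1}\dots a_{m-1}$: the bad-character rule only looks at $a_{m-i}$; the good-suffix rule's sets $\mathcal{S}^S(a,i)$ and $\mathcal{K}^S(a,i)$ are defined via $a_{m-i}$ and $a_{m-i+1}\dots a_{m-1}$ (and even $a_{m-k-1}\dots a_{m-1}$ for $k\le i-2$, which is still within the matched suffix). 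Since $a$ and $a'$ agree on all of these — they share the same $i$, the same mismatch character, and the same matched suffix, all of which are encoded in $r$ — we get $bc^S(a,i)=bc^S(a',i)$ and $gs^S(a,i)=gs^S(a',i)$, hence equal shifts.

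The main obstacle I anticipate is the boundary bookkeeping around the relationship between $\length{r}$ and $i$, and in particular making precise that the mismatch character $a_{m-i}$ is determined by $r$ (and $S$). The subtlety is that $i-1$ (the matched-suffix length) need not equal $\ell=\length{r}$; we only know $i-1\le\ell$ and that the matched suffix is a suffix of $r$. When $i-1<\ell$, the character $a_{m-i}$ is the $(\ell-i+1)$-th-from-last character of $r$, so it is read off from $r$ directly. When $i-1=\ell$, i.e.\ the entire $r$ matches the length-$\ell$ suffix of $S$ but $a_{m-i}=a_{m-\ell-1}$ extends $r$ to a non-substring, the mismatch character is $a_{m-\ell-1}$, which is \emph{not} part of $r$; here I need the extra argument that the mismatch character is nonetheless forced — it must differ from $s_{m-\ell-1}$ (else $a_{m-\ell-1}\dots a_{m-1}=s_{m-\ell-1}\dots s_{m-1}$ would be a substring of $S$, contradicting maximality of $r$), and for BM's cost that is all that matters (any such extension yields the same $i$), while for the good-suffix rule the condition $s_{k-1}\neq a_{m-i}$ does depend on the actual value, so I must confirm compatibility is not claimed at the level of individual $a$ but only that $a$ and $a'$ with the same $r$ behave the same — and two windows with $\rep_{\subcl{S}}(a)=\rep_{\subcl{S}}(a')=r$ and $i-1=\ell$ could a priori have \emph{different} mismatch characters $a_{m-\ell-1}\neq a'_{m-\ell-1}$. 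Resolving this is the crux: I expect to argue that in the case $i-1=\ell$ the good-suffix set $\mathcal{S}^S$ and $\mathcal{K}^S$, when the matched suffix already has length $\ell$, do not actually depend on the precise mismatch character — because any $k$ with $s_k\dots s_{k+i-2}=r$ and $k\le m-i$ would make $s_{k-1}r$ (length $\ell+1$) a substring of $S$ for at most one value of $s_{k-1}$, so the constraint $s_{k-1}\neq a_{m-i}$ is satisfied for all but at most one choice of mismatch character, and a careful case analysis shows the resulting shift is the same. Pinning down this last case is the part I expect to require the most care.
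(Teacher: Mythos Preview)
Your instinct about the boundary case is exactly right, and in fact it is sharper than the paper's own argument. The paper's proof simply asserts that the first mismatch occurs at position $m-i$ with $i\le|r|$, and that $bc^S$ and $gs^S$ ``only depend on the suffixes of length $i-1$''. Both assertions dodge the case you isolate: when $r$ happens to be a \emph{suffix} of $S$ of length $\ell<m$, every window $a$ with $\rep_{\subcl{S}}(a)=r$ matches $S$ on its last $\ell$ characters and mismatches at position $m-\ell-1$, so $i=\ell+1>|r|$; and both $bc^S$ and $gs^S$ use the mismatch character $a_{m-i}$, which lies outside $r$.

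Your proposed repair for $gs^S$ is correct: if $s_k\dots s_{k+\ell-1}=r$ with $k\ge 1$, then $s_{k-1}r$ is a substring of $S$, whereas $a_{m-i}r$ is not (by maximality of $r$), so $s_{k-1}\neq a_{m-i}$ automatically and $\mathcal{S}^S(a,i)$ is independent of the mismatch character. The gap is that you do not revisit $bc^S$, which still depends on $a_{m-i}$ directly, and here the argument cannot be saved. Take $\Sigma=\{a,b,c,d\}$, $S=cbaba$, and the two windows $w=bbbda$, $w'=bbbca$. Both have $\rep_{\subcl{S}}(w)=\rep_{\subcl{S}}(w')=a$ and cost $f^S_{\text{BM}}=2$, and both have $gs^S=2$ (the second occurrence of $a$ in $S$ is at position~$2$). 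But $bc^S(w,2)=4$ (the character $d$ does not occur in $s_0s_1s_2=cba$) while $bc^S(w',2)=3$ (the character $c$ occurs at position~$0$), so $g^S_{\text{BM}}(w)=\max\{4,2\}=4\neq 3=\max\{3,2\}=g^S_{\text{BM}}(w')$. Thus the lemma, as stated with the set $\subcl{S}$ of substrings, is false for the BM shift; neither the paper's proof nor your proposed case analysis can close this, because the obstruction is genuine rather than technical. (The cost claim $f^S_{\text{BM}}(a)=f^S_{\text{BM}}(a')$ does hold, and the lemma is fine for BMH and B(N)DM, whose shifts do not inspect the mismatch character.)
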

\begin{proof}
	If $a=S$, then also $a'=S$ since $\rep_{\subcl{S}}(a)=\rep_{\subcl{S}}(a')=S$ and therefore $f^S_{\textnormal{BM}}(a)=f^S_{\textnormal{BM}}(a')=m-i+1$. Otherwise, the first mismatch occurs at position $m-i$ with $i \leq |r|$ and hence $a_{m-i+1} \dots a_{m-1} = a'_{m-i+1} \dots a'_{m-1}$. Thus, $f^S_{\textnormal{BM}}(a)=f^S_{\textnormal{BM}}(a')$, $bc^S(a,i)=bc^S(a',i)$ and $gs^S(a,i)=gs^S(a',i)$ since  the functions only depend on the suffixes of length $i-1$ of $a$ and $a'$. Hence, $\max\set{bc^S(a,i),gs^S(w,i)}=\max\set{bc^S(a',i),gs^S(w,i)}$ and therefore $g^S_{\textnormal{BM}}(a)=g^S_{\textnormal{BM}}(a')$.
\end{proof}

\begin{lemma}\label{lem:compatible with BMH}
Let a pattern $S\in\Sigma^m$ and $a, a' \in \Sigma^m$ with $\rep_{\subcl{S}}(a)=\rep_{\subcl{S}}(a')=r$ be given. Then $f^S_{\textnormal{BMH}}(a)=f^S_{\textnormal{BMH}}(a')$ and $g^S_{\textnormal{BMH}}(a)=g^S_{\textnormal{BMH}}(a')$.
\end{lemma}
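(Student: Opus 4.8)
The plan is to handle the cost function and the shift function separately. For the cost, note that the BMH algorithm performs exactly the same right-to-left comparison of window and pattern as BM, so $f^S_{\textnormal{BMH}} = f^S_{\textnormal{BM}}$ as functions on $\Sigma^m$; hence $f^S_{\textnormal{BMH}}(a) = f^S_{\textnormal{BMH}}(a')$ follows immediately from Lemma~\ref{lem:compatible with BM}, which already gives $f^S_{\textnormal{BM}}(a) = f^S_{\textnormal{BM}}(a')$ whenever $\rep_{\subcl{S}}(a) = \rep_{\subcl{S}}(a')$. Nothing new is needed for the cost part.

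For the shift, I would use that $g^S_{\textnormal{BMH}}(w) = bc^S(w,1)$ and that $bc^S(w,1)$ depends only on the last character $w_{m-1}$ of the window: it equals $m$ if $w_{m-1} \notin \set{s_0,\dots,s_{m-2}}$, and otherwise equals $\min\set{k \where 1 \le k \le m-1,\ w_{m-1} = s_{m-1-k}}$. So it suffices to show that $a$ and $a'$ yield the same value of this expression, and I would split into two cases according to whether $r = \varepsilon$. If $r \neq \varepsilon$, then $|r| \ge 1$ and $r$ is a common suffix of $a$ and $a'$, so $a_{m-1} = a'_{m-1}$ (both equal the last character of $r$), and the two bad-character shifts coincide at once. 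If $r = \varepsilon$, then $a_{m-1}$ cannot be a substring of $S$ — otherwise the length-one suffix $a_{m-1}$ of $a$ would lie in $\subcl{S}$, contradicting $\rep_{\subcl{S}}(a) = \varepsilon$ — so in particular $a_{m-1} \notin \set{s_0,\dots,s_{m-1}} \supseteq \set{s_0,\dots,s_{m-2}}$ and hence $bc^S(a,1) = m$; the same reasoning applied to $a'$ gives $bc^S(a',1) = m$, so the shifts agree.

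I expect the only delicate point to be the subcase $r = \varepsilon$: there $a$ and $a'$ need not end in the same character (they may end in two distinct characters that are both absent from $S$), so the argument from the case $r \neq \varepsilon$ does not apply, and one must instead invoke the fact that a window whose last character does not occur in $S$ always triggers the maximal bad-character shift $m$. Everything else reduces to the two observations that the BMH cost is literally the BM cost and that the BMH shift reads off a single character, so the bulk of the work is already done by Lemma~\ref{lem:compatible with BM}.
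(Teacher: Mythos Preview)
Your proof is correct and follows essentially the same approach as the paper: both reduce the cost equality to Lemma~\ref{lem:compatible with BM} via $f^S_{\textnormal{BMH}}=f^S_{\textnormal{BM}}$, and both handle the shift by noting that $g^S_{\textnormal{BMH}}(w)=bc^S(w,1)$ depends only on $w_{m-1}$. The paper simply writes ``the claim follows by the same reasoning as in Lemma~\ref{lem:compatible with BM}'' without spelling out the $r=\varepsilon$ subcase; your explicit treatment of that case (where $a_{m-1}$ and $a'_{m-1}$ may differ but are both absent from $S$, forcing $bc^S(\cdot,1)=m$) is a welcome addition that the paper leaves implicit.
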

\begin{proof}
	Since $f^S_{\text{BMH}}(w) = f^S_{\text{BM}}(w)$ for all $w$ and $g^S_{\text{BMH}}$ is a special case of the bad-character rule of the BM algorithm, the claim follows by same reasoning as in Lemma \ref{lem:compatible with BM}.
\end{proof}

\begin{lemma}\label{lem:compatible with B(N)DM}
Let a pattern $S\in\Sigma^m$ and $a, a' \in \Sigma^m$ with $\rep_{\subcl{S}}(a)=\rep_{\subcl{S}}(a')=r$ be given. Then $f^S_{\textnormal{B(N)DM}}(a)=f^S_{\textnormal{B(N)DM}}(a')$ and $g^S_{\textnormal{B(N)DM}}(a)=g^S_{\textnormal{B(N)DM}}(a')$.
\end{lemma}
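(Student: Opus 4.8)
The plan is to mirror the structure of the proof of Lemma~\ref{lem:compatible with BM}, but now using the characterization of the B(N)DM cost and shift functions in terms of the suffix automaton of $S^{\text{rev}}$. Recall that $f^S_{\text{B(N)DM}}(w)$ depends on $j^S_w$, the number of transitions the suffix automaton of $S^{\text{rev}}$ performs on $w$ before hitting the FAIL-state, and that $g^S_{\text{B(N)DM}}(w)$ depends on the set $\mathcal{I}^S(w)$ of prefix-lengths $i$ such that $w_{m-i}\dots w_{m-1}$ is a prefix of $S$ (equivalently, a substring of $S$ read from $w^{\text{rev}}$ lands in an accepting state after $i$ steps). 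The key observation is that both quantities are determined solely by the longest suffix of $w$ that is a substring of $S$, i.e.\ by $r = \rep_{\subcl{S}}(w)$.

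First I would handle the case $w = S$: then $\rep_{\subcl{S}}(a) = \rep_{\subcl{S}}(a') = S$ forces $a = a' = S$, so $f^S_{\text{B(N)DM}}(a) = f^S_{\text{B(N)DM}}(a') = m$ and $g^S_{\text{B(N)DM}}(a) = g^S_{\text{B(N)DM}}(a') = m - j^S_S$ (or whatever the $w=S$ branch gives), trivially equal. Then I would treat the case $w \neq S$. Here the crucial step is to argue that the suffix automaton of $S^{\text{rev}}$, when fed $w$ (which the algorithm reads right-to-left, i.e.\ it processes $w_{m-1}, w_{m-2}, \dots$), only ever examines characters $w_{m-1}, \dots, w_{m-j}$ up to the point where it fails, and that the sequence of states it visits — hence both $j^S_w$ and the collection of accepting states visited along the way — is a function of the longest suffix of $w$ lying in $\subcl{S}$. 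Concretely: let $j = j^S_w$; then $w_{m-j}\dots w_{m-1}$ is a substring of $S$ (it is accepted or at least read without failing), so $j \le |r|$ and $w_{m-j}\dots w_{m-1} = r_{|r|-j}\dots r_{|r|-1}$ is a suffix of $r$. Since the suffix automaton is deterministic and starts in the same state regardless of $w$, the run on $w$ and the run on $a'$ coincide for the first $\min(j_w, j_{w'})$ steps; showing $j_w = j_{w'}$ then reduces to showing that reading one more character would fail in both cases, which again depends only on whether $r\sigma$-type extensions stay within $\subcl{S}$ — and $a, a'$ agree on exactly those relevant suffix positions because they share the representative $r$. This yields $f^S_{\text{B(N)DM}}(a) = f^S_{\text{B(N)DM}}(a')$, and since $\mathcal{I}^S(a)$ is exactly the set of $i \le j$ for which the $i$-th state on this common run is accepting, also $\mathcal{I}^S(a) = \mathcal{I}^S(a')$ and hence $g^S_{\text{B(N)DM}}(a) = g^S_{\text{B(N)DM}}(a')$.

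The main obstacle I anticipate is being careful about the reversal bookkeeping: the suffix automaton is built for $S^{\text{rev}}$ and the algorithm reads the window from right to left, so $w$ is effectively presented as $w^{\text{rev}} = w_{m-1}\dots w_0$, and "$w_{m-i}\dots w_{m-1}$ is a prefix of $S$" must be reconciled with "$w_{m-1}\dots w_{m-i}$ is a suffix of $S^{\text{rev}}$" — these are the same statement, but stating it cleanly matters. A secondary subtlety is making precise the claim that $j^S_w \le |r|$: this requires noting that any string read by the suffix automaton without reaching FAIL is a substring of $S^{\text{rev}}$, hence its reverse is a substring of $S$, hence (being a suffix of $w$) it is a suffix of $\rep_{\subcl{S}}(w) = r$; so once $j$ characters have been read successfully, those $j$ characters of $w$ are determined by $r$, and therefore the next step — whether it fails and whether it enters an accepting state — is also determined by $r$. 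Once this is spelled out, the equalities $f^S_{\text{B(N)DM}}(a) = f^S_{\text{B(N)DM}}(a')$ and $g^S_{\text{B(N)DM}}(a) = g^S_{\text{B(N)DM}}(a')$ follow immediately, completing the proof that $\subcl{S}$ is compatible with B(N)DM and thus, together with Lemmas~\ref{lem:compatible with BM} and~\ref{lem:compatible with BMH} and Theorem~\ref{thm:DAA on substrings calculates number of character comparisons}, that the $O(m^3)$-state construction is correct for all three algorithms.
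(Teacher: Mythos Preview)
Your proposal is correct and follows essentially the same approach as the paper: case-split on $a=S$, and otherwise use that the suffix automaton of $S^{\text{rev}}$ recognizes exactly the substrings of $S^{\text{rev}}$, so that $j^S_a=|r|=j^S_{a'}$ and the entire run of the automaton (hence $\mathcal{I}^S$) is determined by $r$ alone. The paper asserts the key equality $j^S_a=|r|$ directly (via $a_{m-j^S_a}\dots a_{m-1}=r$), whereas you reach it through $j^S_a\le |r|$ plus a determinism argument about the next step, but this is only a difference in exposition.
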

\begin{proof}
	If $a=S$, then also $a'=S$ since $\rep_{\subcl{S}}(a)=\rep_{\subcl{S}}(a')=S$ and therefore $f^S_{\textnormal{B(N)DM}}(a)=f^S_{\textnormal{B(N)DM}}(a')=m$. Otherwise, $j^S_a$ and $j^S_{a'}$ are the number of transitions the suffix automaton does before entering the FAIL-state (excluding the transition to the FAIL-state) when reading $a$ and $a'$, respectively. Thus, $a_{j^S_a} \dots a_{m-1}=a'_{j^S_{a'}} \dots a'_{m-1}=r$ and hence $j^S_a=j^S_{a'}$, i.e. $f^S_{\textnormal{B(N)DM}}(a)=f^S_{\textnormal{B(N)DM}}(a')$. Furthermore, $\mathcal{I}^S(a)$ contains all positions $i$ such that the suffix automaton is in an accepting state after reading $i$ characters of $a$. Since the FAIL-state is a non-accepting sink, $\max\set{i \where i \in \mathcal{I}^S(a)} = \max\set{i \where i \in \mathcal{I}^S(a')} < j^S_a = j^S_{a'}$ and thus $g^S_{\textnormal{B(N)DM}}(a)=g^S_{\textnormal{B(N)DM}}(a')$.
\end{proof}

Since we showed that the set of all substrings of the pattern $S$ is a set of window representatives that is compatible with BM, BMH, and B(N)DM, we can use the general construction scheme from Definition \ref{def:general construction scheme} with $\subcl{S}$ to construct DAAs. Figure \ref{fig:DAA reduced} shows the states and the transition function of a DAA for BMH constructed according to the scheme. Since $|\subcl{S}| \leq m^2$ for every pattern of length $m$, the size of the state space of the DAAs for BM, BMH, and B(N)DM is bounded by $O(m^3)$.
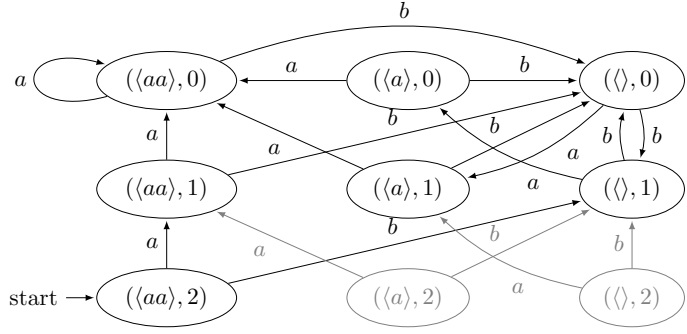
\begin{figure}
	\centering
	\scalebox{0.8}{
	\begin{tikzpicture}[->,>=latex,shorten >=1pt,auto,node distance=1.8cm and 1.8cm,thin]
		  \tikzstyle{every state}=[fill=none,draw=black,text=black]
		  \node[state,ellipse] (AA0)                    		{$(\class{aa},0)$};
		  \node[state,ellipse] (AA1)			[below of=AA0]	{$(\class{aa},1)$};
		  \node[state,ellipse,initial] (AA2)			[below of=AA1]	{$(\class{aa},2)$};
		  \node[state,ellipse] (A0)            [right=of AA0]	{$(\class{a},0)$};
		  \node[state,ellipse] (A1)				[below of=A0]	{$(\class{a},1)$};
		  \node[state,ellipse,text=gray,draw=gray] (A2)				[below of=A1]	{$(\class{a},2)$};
		  \node[state,ellipse] (E0)            [right=of A0]	{$(\class{ },0)$};
		  \node[state,ellipse] (E1)				[below of=E0]	{$(\class{ },1)$};
		  \node[state,ellipse,text=gray,draw=gray] (E2)				[below of=E1]	{$(\class{ },2)$};

		   \path 	(AA0)	edge	[loop left]		node 			{$a$}	(AA0)
		   					edge	[bend left=20]	node			{$b$}	(E0)
		   			(AA1)	edge					node			{$a$}	(AA0)
		   					edge					node			{$b$}	(E0)
		   			(AA2)	edge					node			{$a$}	(AA1)
		   					edge					node			{$b$}	(E1)
		   			(A0)	edge	[above]			node			{$a$}	(AA0)
		   					edge					node			{$b$}	(E0)
		   			(A1)	edge					node			{$a$}	(AA0)
		   					edge					node	[pos=0.4] 		{$b$}	(E0)
		   			(A2)	edge	[gray]			node	[pos=0.6]		{$a$}	(AA1)
		   					edge	[gray]			node	[pos=0.4]		{$b$}	(E1)
		   			(E0)	edge	[bend left=15]	node	[pos=0.35]		{$a$}	(A1)
		   					edge	[bend left=15]	node			{$b$}	(E1)
		   			(E1)	edge	[bend left=15]	node	[pos=0.2]		{$a$}	(A0)
		   					edge	[bend left=15]	node			{$b$}	(E0)
		   			(E2)	edge	[bend left=15,gray]	node	[pos=0.3]		{$a$}	(A1)
		   					edge	[gray]			node			{$b$}	(E1);
	\end{tikzpicture}}
	\caption{States and transition function for the DAA with reduced state space modelling the BMH algorithm for $\Sigma=\set{a,b}$ and $S=aa$. Unreachable states and their outgoing edges are marked in gray.}\label{fig:DAA reduced}
\end{figure}

\paragraph*{Window representatives for BOM}

Since the factor-oracle accepts strings that are no substrings of the pattern, $\subcl{S}$ is not compatible with BOM. We show that we can use the set of the reversion of all strings that are accepted by the factor oracle, $\focl{S}$, instead:
\[\focl{S} \eqdef \set{x \where  \text{the factor oracle of } S^\textnormal{rev} \text{ is not in the FAIL-State after reading } x^\text{rev}}\]

\begin{lemma}
Given a pattern $S\in\Sigma^m$, $\focl{S}$ is a set of window representatives.
\end{lemma}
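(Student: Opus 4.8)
The plan is to verify the two defining properties of a set of window representatives from Definition \ref{def:window representatives}: well-definedness of $\rep_{\focl{S}}$ and existence of a suitable transition function $\delta_{\focl{S}}$. For the first property, I would argue exactly as in the case of $\subcl{S}$: the empty string $\varepsilon$ is read by the factor oracle without entering the FAIL-state (the oracle starts in its initial, non-FAIL state), so $\varepsilon\in\focl{S}$, and hence every $a\in\Sigma^m$ has at least the suffix $\varepsilon$ in $\focl{S}$. Therefore $\rep_{\focl{S}}(a)$, the longest suffix of $a$ lying in $\focl{S}$, is well-defined.

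For the second property I would define $\delta_{\focl{S}}(r,\sigma)$ to be the longest suffix of $r\sigma$ that lies in $\focl{S}$, i.e.\ $\delta_{\focl{S}}(r,\sigma)\eqdef\rep_{\focl{S}}(r\sigma)$, mirroring the construction used for $\subcl{S}$. I then need to check two things: (i) this is a well-defined map into $\focl{S}$, which again follows because $\varepsilon$ is always an admissible suffix; and (ii) it satisfies the identity $\delta_{\focl{S}}(\rep_{\focl{S}}(a),\sigma)=\rep_{\focl{S}}(a_1\dots a_{m-1}\sigma)$ for all $a\in\Sigma^m$, $\sigma\in\Sigma$. For (ii) it suffices to show that $\rep_{\focl{S}}(a_1\dots a_{m-1}\sigma)$ depends on $a$ only through $\rep_{\focl{S}}(a)$ and $\sigma$, and in fact equals $\rep_{\focl{S}}\bigl(\rep_{\focl{S}}(a)\,\sigma\bigr)$. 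The key structural fact I would invoke is a \emph{suffix-closure property of $\focl{S}$}: if a string $x$ is read by the factor oracle of $S^{\text{rev}}$ without failing, then so is every suffix of $x$ — equivalently, $\focl{S}$ is closed under taking suffixes. Granting this, the longest suffix of $a_1\dots a_{m-1}\sigma$ in $\focl{S}$ is the longest suffix of the form $y\sigma$ with $y$ a suffix of $a_1\dots a_{m-1}$ and $y\sigma\in\focl{S}$; by suffix-closure, for $y\sigma$ to be in $\focl{S}$ it is necessary that $y\in\focl{S}$, and the longest such admissible $y$ is a suffix of $\rep_{\focl{S}}(a)$ (adjusting for the dropped first character, which only matters when $\rep_{\focl{S}}(a)$ already has length $m-1$ — but a length-$(m-1)$ member of $\focl{S}$ that is a suffix of $a$ is automatically a suffix of $a_1\dots a_{m-1}$). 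Hence $\rep_{\focl{S}}(a_1\dots a_{m-1}\sigma)=\rep_{\focl{S}}(\rep_{\focl{S}}(a)\,\sigma)$, which is precisely the required transition identity with $\delta_{\focl{S}}(r,\sigma)=\rep_{\focl{S}}(r\sigma)$.

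The main obstacle is establishing the suffix-closure property of $\focl{S}$, which is a genuine statement about the factor oracle construction rather than a formal triviality. The factor oracle of a word $p$ recognizes a language that contains all factors of $p$ but possibly more; what I need is that this superset is still suffix-closed (in the appropriate, reversed sense — recall $x\in\focl{S}$ iff $x^{\text{rev}}$ is read by the oracle of $S^{\text{rev}}$ without failing, so I need the oracle-accepted language to be \emph{prefix-closed}, which is immediate since the FAIL-state is a sink: if reading $x^{\text{rev}}$ does not reach FAIL, then reading any prefix of $x^{\text{rev}}$ certainly does not). Phrased this way the obstacle actually dissolves: prefix-closure of the oracle language is automatic from FAIL being an absorbing state, exactly as was used implicitly for $\subcl{S}$ and for $\mathcal{I}^S$. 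So the proof is short: note $\varepsilon\in\focl{S}$; define $\delta_{\focl{S}}(r,\sigma)$ as the longest suffix of $r\sigma$ in $\focl{S}$; and observe that because the FAIL-state is a non-accepting sink, membership in $\focl{S}$ is closed under taking suffixes, from which $\delta_{\focl{S}}(\rep_{\focl{S}}(a),\sigma)=\rep_{\focl{S}}(a_1\dots a_{m-1}\sigma)$ follows directly as in the $\subcl{S}$ case. I would also remark that $|\focl{S}|$ is not bounded by $m^2$ in general — this is why the paper only claims a state-space guarantee for BM, BMH, and B(N)DM — but $\focl{S}$ is still finite (it is a subset of $\Sigma^{\le m}$ since only windows of length $m$ are ever fed to the algorithm, or more precisely one restricts to suffixes of length-$m$ strings), so it is a legitimate, if possibly large, set of window representatives.
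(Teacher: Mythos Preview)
Your overall strategy is the paper's: show $\varepsilon\in\focl{S}$ so that $\rep_{\focl{S}}$ is well-defined, and set $\delta_{\focl{S}}(r,\sigma)$ to be the longest suffix of $r\sigma$ lying in $\focl{S}$. The paper then simply asserts that the required transition identity ``follows directly with the definition of $\rep_{\focl{S}}$'' and stops; you attempt to justify it, which is where a gap appears.

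You correctly argue that the oracle language is \emph{prefix}-closed (FAIL is a sink) and that this makes $\focl{S}$ \emph{suffix}-closed. But the inference you then draw, ``for $y\sigma$ to be in $\focl{S}$ it is necessary that $y\in\focl{S}$'', needs $\focl{S}$ to be \emph{prefix}-closed: $y$ is obtained from $y\sigma$ by deleting the last character, so $y$ is a prefix of $y\sigma$, not a suffix. Prefix-closure of $\focl{S}$ is equivalent to \emph{suffix}-closure of the oracle language, an entirely different statement that does not follow from FAIL being absorbing. Without it your argument does not close: if some $y\sigma\in\focl{S}$ with $y\notin\focl{S}$ and $|y|>|r|$ were a suffix of $a_1\dots a_{m-1}\sigma$ for one window $a$ with $\rep_{\focl{S}}(a)=r$ but not for another such $a'$, the two values $\rep_{\focl{S}}(a_1\dots a_{m-1}\sigma)$ and $\rep_{\focl{S}}(a'_1\dots a'_{m-1}\sigma)$ would differ and no transition function could exist. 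So the detailed verification you give is not valid as written; what is actually needed is a (nontrivial) structural property of the factor oracle ensuring prefix-closure of $\focl{S}$, or a different argument altogether. The paper's one-line proof does not address this point either --- your more careful attempt simply makes the hidden step visible.
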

\begin{proof}
Since the empty string $\varepsilon$ is accepted by the factor oracle of $S^\textnormal{rev}$, $\rep_{\focl{S}}$ is well-defined. Let $r \in \focl{S}$ and define $\delta_{\focl{S}}(r,\sigma)$ to be the longest suffix of $r\sigma$ that is accepted by the factor oracle of $S^\text{rev}$. Then $\delta_{\focl{S}}(r,\sigma) = \rep_{\focl{S}}(r_1 \dots r_{m-1}\sigma)$ follows directly with the definition of $\rep_{\focl{S}}$.
\end{proof}

\begin{lemma}\label{lem:compatible with BOM}
Let a pattern $S\in\Sigma^m$ and $a, a' \in \Sigma^m$ with $\rep_{\focl{S}}(a)=\rep_{\focl{S}}(a')=r$ be given. Then $f^S_{\textnormal{BOM}}(a)=f^S_{\textnormal{BOM}}(a')$ and $g^S_{\textnormal{BOM}}(a)=g^S_{\textnormal{BOM}}(a')$.
\end{lemma}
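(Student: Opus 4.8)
\emph{Proof proposal.} The plan is to express both $f^S_{\textnormal{BOM}}(w)$ and $g^S_{\textnormal{BOM}}(w)$ in terms of a single number, $k^S_w$, together with the predicate "$w=S$", and then to show that each of these is already determined by $\rep_{\focl{S}}(w)$. Concretely, I would first prove the identity $k^S_w=|\rep_{\focl{S}}(w)|$ for every $w\in\Sigma^m$, and separately that, for $w\in\Sigma^m$, one has $w=S$ if and only if $\rep_{\focl{S}}(w)=S$. Both equalities claimed in the lemma then follow from a two-case split on whether $r=S$.

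For the identity $k^S_w=|\rep_{\focl{S}}(w)|$: since BOM processes the window from right to left, the factor oracle of $S^{\textnormal{rev}}$ reads $w_{m-1}w_{m-2}\dots$, and after $j$ successful transitions it has consumed $w_{m-1}\dots w_{m-j}$, i.e.\ the reversal of the length-$j$ suffix of $w$; it is in a non-FAIL state exactly when that suffix lies in $\focl{S}$. By the definition of $k^S_w$, the length-$k^S_w$ suffix of $w$ lies in $\focl{S}$, and I claim no strictly longer suffix can: reading the reversal of such a suffix would begin by reading the reversal of the length-$(k^S_w+1)$ suffix, which drives the (deterministic) oracle into the FAIL state, and FAIL is absorbing, so the whole run ends in FAIL. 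Hence $k^S_w$ is precisely the length of the longest suffix of $w$ in $\focl{S}$, i.e.\ $|\rep_{\focl{S}}(w)|$. (When $w=S$ there is no FAIL transition and $k^S_S=m=|\rep_{\focl{S}}(S)|$, so the identity still holds.)

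For the equivalence: $S\in\focl{S}$ because the oracle follows its spine $0\to 1\to\dots\to m$ on $S^{\textnormal{rev}}$ without failing, so $\rep_{\focl{S}}(S)=S$; conversely $\rep_{\focl{S}}(w)$ is a suffix of $w$, so if it equals $S$ and $|w|=m=|S|$ then $w=S$. To finish, write $r=\rep_{\focl{S}}(a)=\rep_{\focl{S}}(a')$. If $r=S$, then $a=a'=S$, so $f^S_{\textnormal{BOM}}(a)=f^S_{\textnormal{BOM}}(a')=m$ and $g^S_{\textnormal{BOM}}(a)=g^S_{\textnormal{BOM}}(a')=1$. If $r\neq S$, then $a\neq S$ and $a'\neq S$, so by the identity $f^S_{\textnormal{BOM}}(a)=k^S_a+1=|r|+1=k^S_{a'}+1=f^S_{\textnormal{BOM}}(a')$ and $g^S_{\textnormal{BOM}}(a)=m-k^S_a=m-|r|=g^S_{\textnormal{BOM}}(a')$.

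I expect the main obstacle to be the first identity, $k^S_w=|\rep_{\focl{S}}(w)|$, since that is where the operational definition of $k^S_w$ (a count of oracle transitions) must be reconciled with the purely combinatorial definition of $\rep_{\focl{S}}$. The only structural property of the factor oracle this argument needs is that its FAIL state is absorbing, which is exactly what rules out a longer accepted suffix; no finer analysis of the oracle's transitions is required, which makes this case run parallel to the B(N)DM case in Lemma~\ref{lem:compatible with B(N)DM}.
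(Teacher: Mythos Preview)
Your proposal is correct and follows essentially the same approach as the paper: a case split on whether $r=S$, and in the other case the observation that $k^S_w$ equals the length of the longest suffix of $w$ accepted by the factor oracle, i.e.\ $|r|$, so that $k^S_a=k^S_{a'}$. You are in fact more explicit than the paper in isolating the identity $k^S_w=|\rep_{\focl{S}}(w)|$ and in invoking that FAIL is absorbing to rule out a longer accepted suffix; the paper asserts the corresponding step more tersely.
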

\begin{proof}
If $a=S$, then also $a'=S$ since $S \in \focl{S}$ and $\rep_{\focl{S}}(a)=\rep_{\focl{S}}(a')=S$. Hence, $f^S_\text{BOM}(a)=f^S_\text{BOM}(a')=m$ and $g^S_\text{BOM}(a)=g^S_\text{BOM}(a')=1$. Otherwise, $k^S_a$ and $k^S_{a'}$ are the number of transitions the factor oracle does before entering the FAIL-state when reading $a$ and $a'$, respectively. Thus, $a_{k^S_a} \dots a_{m-1}=a'_{k^S_{a'}} \dots a'_{m-1}=r$ and hence $k^S_a=k^S_{a'}$. Therefore $f^S_{\textnormal{BOM}}(a)=f^S_{\textnormal{BOM}}(a')$ and $g^S_{\textnormal{BOM}}(a)=g^S_{\textnormal{BOM}}(a')$.
\end{proof}

The size of the state space of a DAA constructed according to the general scheme for BOM depends on the factor oracle of $S^\text{rev}$. It accepts more strings than all substrings of $S^\text{rev}$ but in practice not many more \cite{NavarroR2002}.
			
\section{Construction of a PAA from a DAA with reduced state space}\label{ch:PAA}
Since we showed that the DAAs constructed in Section~\ref{ch:scheme} calculate the number of character accesses of BM, BMH, B(N)DM, and BOM, respectively, we can use the construction given in Definition \ref{def:PAA induced by DAA and text model} to obtain a PAA for the character access count distribution.
Figure~\ref{fig:PAA} shows an example of a resulting PAA for the BMH algorithm.
\begin{figure}
	\centering
	\begin{minipage}[t]{0.3\linewidth}
		\scalebox{0.7}{
		\begin{tikzpicture}[->,>=latex,shorten >=1pt,auto,node distance=1.8cm and 1.8cm,thin]
			  \tikzstyle{every state}=[fill=none,draw=black,text=black]
			  
			  \node[state,ellipse,initial] (C0)          		{$c_0$};
			  \node[state,ellipse] (C1)			[right=of C0]	{$c_1$};
			  
			 \path 	(C0)	edge	[loop above]	node 			{$0.4: a$}	(C0)
		   					edge	[bend left=20]	node			{$0.6: b$}	(C1)
		   			(C1)	edge	[bend left=20]	node			{$0.8: a$}	(C0)
		   					edge	[loop above]	node			{$0.2: b$}	(C1);
			  
		\end{tikzpicture}}
		\caption{A finite-memory text model over the alphabet $\Sigma=\set{a,b}$ with two states.}\label{fig:text model}
	\end{minipage}
	\hfill
	\begin{minipage}[t]{0.66\linewidth}
		\scalebox{0.69}{
		\begin{tikzpicture}[->,>=latex,shorten >=1pt,auto,node distance=1.8cm and 1.5cm,thin]
			  \tikzstyle{every state}=[fill=none,draw=black,text=black]
			  
			  \node[state,ellipse] (AA0)                    		{$((\class{aa},0),c_0)$};
			  \node[state,ellipse] (AA1)			[below of=AA0]	{$((\class{aa},1),c_0)$};
			  \node[state,ellipse,initial] (AA2)	[below of=AA1]	{$((\class{aa},2),c_0)$};
			  \node[state,ellipse] (A0)            [right=of AA0]	{$((\class{a},0),c_0)$};
			  \node[state,ellipse] (A1)				[below of=A0]	{$((\class{a},1),c_0)$};
			  \node[state,ellipse] (E0)            [right=of A0]	{$((\class{ },0),c_1)$};
			  \node[state,ellipse] (E1)				[below of=E0]	{$((\class{ },1),c_1)$};
			  
			 \path 	(AA0)	edge	[loop left]		node 			{$0.4$}	(AA0)
		   					edge	[bend left=20]	node			{$0.6$}	(E0)
		   			(AA1)	edge					node			{$0.4$}	(AA0)
		   					edge					node			{$0.6$}	(E0)
		   			(AA2)	edge					node			{$0.4$}	(AA1)
		   					edge	[below]			node			{$0.6$}	(E1)
		   			(A0)	edge	[above]			node			{$0.4$}	(AA0)
		   					edge					node			{$0.6$}	(E0)
		   			(A1)	edge					node			{$0.4$}	(AA0)
		   					edge					node	[pos=0.4] 		{$0.6$}	(E0)
		   			(E0)	edge	[bend left=15]	node	[pos=0.35]		{$0.8$}	(A1)
		   					edge	[bend left=15]	node			{$0.2$}	(E1)
		   			(E1)	edge	[bend left=15]	node	[pos=0.2]		{$0.8$}	(A0)
		   					edge	[bend left=15]	node			{$0.2$}	(E0);
			  
		\end{tikzpicture}}
		\caption{States and transition function for the PAA with reduced state space modelling the BMH algorithm for $\Sigma=\set{a,b}$ and $S=aa$ and the text model shown in Figure \ref{fig:text model}. Unreachable states are omitted.}\label{fig:PAA}
	\end{minipage}
\end{figure}

For the DAAs modeling the BM, BMH and B(N)DM algorithms, the state space of the resulting PAA is bounded by $O(m^3 \cdot |\mathcal{C}|)$, where $\mathcal{C}$ denotes the state space of the finite-memory text model used. For the DAA for the BOM algorithm, the state space of the resulting PAA is bounded by $O(m \cdot |\focl{S}| \cdot |\mathcal{C}|)$.
As a direct consequence of Theorem~\ref{thm:time space construction PAA}, we obtain the following runtimes for computing the respective character access count distributions.

\begin{theorem}\label{thm:final_runtimes}
Let a finite-memory text model $\mathcal{M} = \left(\mathcal{C}, c_0, \Sigma, \varphi\right)$ and a pattern $S$ with $|S|=m$ be given. 
For $\mathcal{A}\in\{\text{BM}, \text{BMH}, \text{B(N)DM}\}$, the cost distributions $\mathcal{L}(X^{\mathcal{A},S}_n)$ can be computed using
$O(n^2 \cdot m^4 \cdot |\mathcal{C}|^2 \cdot |\Sigma|)$  time and  $O(m^4 \cdot |\mathcal{C}| \cdot n)$ space.
For the BOM algorithm, $\mathcal{L}(X^{\textnormal{BOM},S}_n)$ can be computed using 
$O(n^2 \cdot m^2 \cdot |\focl{S}| \cdot |\mathcal{C}|^2 \cdot |\Sigma|)$ time and $O(m^2 \cdot |\focl{S}| \cdot |\mathcal{C}| \cdot n)$ space.
\end{theorem}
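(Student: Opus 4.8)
The plan is to obtain Theorem~\ref{thm:final_runtimes} as an immediate corollary of Theorem~\ref{thm:time space construction PAA}, by substituting the sizes of the state spaces of the DAAs produced by the general construction scheme of Definition~\ref{def:general construction scheme}. First I would record that these DAAs are legitimate: for $\mathcal{A}\in\{\text{BM},\text{BMH},\text{B(N)DM}\}$, Lemmas~\ref{lem:compatible with BM}--\ref{lem:compatible with B(N)DM} show that $\subcl{S}$ is a set of window representatives compatible with $\mathcal{A}$ and $S$, so by Theorem~\ref{thm:DAA on substrings calculates number of character comparisons} the DAA built from $\subcl{S}$ computes $f^S_\mathcal{A}(T)=X^{\mathcal{A},S}_n$ on a text of length $n$; likewise Lemma~\ref{lem:compatible with BOM} gives the analogous statement for BOM with the representative set $\focl{S}$. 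Feeding such a DAA and the given text model $\mathcal{M}$ into Definition~\ref{def:PAA induced by DAA and text model} then yields a PAA whose state space has size $|\mathcal{Q}|\cdot|\mathcal{C}|$ and which computes $\mathcal{L}(X^{\mathcal{A},S}_n)$.

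Next I would bound $|\mathcal{Q}|$. In the scheme, $\mathcal{Q}=\mathcal{R}\times\{0,\dots,m\}$, so $|\mathcal{Q}|\le (m+1)\,|\mathcal{R}|$. For BM, BMH, and B(N)DM we have $\mathcal{R}=\subcl{S}$, and since a substring of $S$ is determined by its start and end position, $|\subcl{S}|\le m^2$, hence $|\mathcal{Q}|=O(m^3)$. For BOM we have $\mathcal{R}=\focl{S}$, hence $|\mathcal{Q}|=O(m\cdot|\focl{S}|)$.

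It then remains to substitute into the time bound $O(n^2\cdot m\cdot|\mathcal{Q}|\cdot|\mathcal{C}|^2\cdot|\Sigma|)$ and space bound $O(|\mathcal{Q}|\cdot|\mathcal{C}|\cdot n\cdot m)$ of Theorem~\ref{thm:time space construction PAA}. For $\mathcal{A}\in\{\text{BM},\text{BMH},\text{B(N)DM}\}$ this gives $O(n^2\cdot m^4\cdot|\mathcal{C}|^2\cdot|\Sigma|)$ time and $O(m^4\cdot|\mathcal{C}|\cdot n)$ space, and for BOM it gives $O(n^2\cdot m^2\cdot|\focl{S}|\cdot|\mathcal{C}|^2\cdot|\Sigma|)$ time and $O(m^2\cdot|\focl{S}|\cdot|\mathcal{C}|\cdot n)$ space, which is exactly the claim. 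There is no real obstacle here: the only point requiring a line of justification is that Theorem~\ref{thm:time space construction PAA} is phrased for a DAA constructed according to Definition~\ref{def:DAA encoding pattern matching algorithm}, whereas ours come from Definition~\ref{def:general construction scheme}; but Theorem~\ref{thm:DAA on substrings calculates number of character comparisons} shows the two compute the same values on every text, and the resource analysis behind Theorem~\ref{thm:time space construction PAA} depends only on $|\mathcal{Q}|$, $|\mathcal{C}|$, $|\Sigma|$, $n$, and $m$, so it carries over verbatim.
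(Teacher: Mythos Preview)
Your proposal is correct and mirrors the paper's own reasoning: the paper presents Theorem~\ref{thm:final_runtimes} simply as ``a direct consequence of Theorem~\ref{thm:time space construction PAA}'' after noting the $O(m^3)$ (resp.\ $O(m\cdot|\focl{S}|)$) state-space bounds for the DAAs built via Definition~\ref{def:general construction scheme}. If anything, you are slightly more careful than the paper in explicitly flagging that Theorem~\ref{thm:time space construction PAA} is stated for the DAA of Definition~\ref{def:DAA encoding pattern matching algorithm} and arguing why its resource analysis transfers to the reduced DAA.
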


\section{Conclusions}\label{sec:conclusion}
We introduced a construction mechanism for state spaces to analyze pattern matching algorithms.
It allows to construct DAAs of size $O(m^3)$ for the Boyer-Moore, Boyer-Moore-Horspool, and Backward (Non-Deterministic) DAWG Matching algorithms as well as DAAs of size $O(m \cdot |\focl{S}|)$ for the Backward Oracle Matching  algorithm.
Building on the framework of deterministic/probabilistic arithmetic automata~\cite{Marschall2012,MarschallR2011}, we immediately obtain an algorithm to compute the distribution of character accesses for general finite-memory text models, including i.i.d.\ models, Markov models of arbitrary order, and character-emitting hidden Markov models.
For BM, BMH, and B(N)DM, this algorithm runs in polynomial time (Theorem~\ref{thm:final_runtimes}) and we are not aware of prior algorithms to achieve this  (although for BMH a polynomial-time algorithm for the case of first-order Markovian models was known, \cite{Tsai2006}).

The PAAs we construct could also be leveraged to obtain asymptotic results, a direction we have not explored here.
Under mild additional assumptions, the Markov chain defined by state space and transition function of the PAA converge to a steady state, and the equilibrium state distribution can straightforwardly be used to compute $\lim_{n\to\infty}\frac{E[X^{\mathcal{A},S}_n]}{n}$.
Similar to \cite{Tsai2006}, we might also show that (appropriately centered and normalized) $X^{\mathcal{A},S}_n$ follows a normal distribution asymptotically and compute mean and variance.
We plan to work out this connection formally in the future.

The question of lower bounds for the DAA sizes is also open.
We consider investigating the sizes of minimal DAAs an interesting future endeavor.

\bibliographystyle{plain}
\bibliography{bibliography}

\end{document}